\newtheorem{theorem}{Theorem}[section]
\newtheorem{corollary}[theorem]{Corollary}
\newtheorem{definition}[theorem]{Definition}
\newtheorem{proposition}[theorem]{Proposition}
\newtheorem*{remark}{Remark}
\newtheorem*{acknowledgement}{Acknowledgement}
\newcommand{\sign}{\mathrm{sign}}
\newcommand{\x}{\mathbf{x}}
\newcommand{\y}{\mathbf{y}}
\newcommand{\z}{\mathbf{z}}
\newcommand{\m}{\mathbf{m}}
\newcommand{\n}{\mathbf{n}}
\renewcommand{\L}{\mathcal{L}}
\newcommand{\A}{\mathrm{A}}
\newcommand{\I}{\mathrm{I}}
\begin{document}

\title{{Spectral properties of the renormalization group at infinite temperature}}

\author{Mei Yin}
\address{Department of Mathematics, University of Arizona, Tucson,
AZ 85721, USA} \email{myin@math.arizona.edu}

\dedicatory{\rm July 24, 2010}

\begin{abstract}
The renormalization group (RG) approach is largely responsible for
the considerable success that has been achieved in developing a
quantitative theory of phase transitions.  Physical properties
emerge from spectral properties of the linearization of the RG map
at a fixed point. This article considers RG for classical
Ising-type lattice systems. The linearization acts on an
infinite-dimensional Banach space of interactions. At a trivial
fixed point (zero interaction), the spectral properties of the RG
linearization can be worked out explicitly, without any
approximation. The results are for the RG maps corresponding to
decimation and majority rule. They indicate spectrum of an unusual
kind: dense point spectrum for which the adjoint operators have no
point spectrum at all, only residual spectrum. This may serve as a
lesson in what one might expect in more general situations.

\keywords{Ising model; renormalization group; trivial fixed point;
residual spectrum}
\end{abstract}

\maketitle

\section{Introduction}
\label{intro} We consider renormalization group (RG)
transformations for Ising-type lattice spin systems on
$\mathbb{Z}^d$ . Our original lattice is denoted by $\L$ and our
image lattice is denoted by $\L'$. The image lattice $\L'$ indexes
a partition of $\L$ into cubical blocks, all with the same
cardinality $b^d$. Thus for each site $\y$ in $\L'$, there is a
corresponding block $\y^o$ that is a subset of $\L$, given by
\begin{equation}
\y^o=\{\x: by_i-\frac{b-1}{2}\leq x_i\leq by_i+\frac{b-1}{2},
1\leq i\leq d\}
\end{equation}
for odd blocking factor $b$; and
\begin{equation}
\y^o=\{\x: by_i-\frac{b-2}{2}\leq x_i\leq by_i+\frac{b}{2}, 1\leq
i\leq d\}
\end{equation}
for even blocking factor $b$. More generally, for each subset $Y$
of $\L'$, there is a corresponding union of blocks $Y^o$ that is a
subset of $\L$. A spin variable $\sigma_{\x}=\pm 1$ is assigned to
each site $\x$ in $\L$, and a block spin variable
$\sigma'_{\y}=\pm 1$ is assigned to each site $\y$ in $\L'$. If
$X$ is a finite subset of the original lattice, then $\sigma_X$
denotes the spin variable $\prod_{\x \in X}\sigma_{\x}$.
Similarly, if $Z$ is a finite subset of the image lattice, then
$\sigma'_Z$ denotes the block spin variable $\prod_{\z\in
Z}\sigma'_{\z}$. The main physical properties of $\L$ are encoded
in the Hamiltonian $H(\sigma)=-\sum_X J(X)\sigma_X$, where $J$ is
the original interaction defined on nonempty finite subsets of
$\L$. Likewise, the main physical properties of $\L'$ are encoded
in the Hamiltonian $H'(\sigma')=-\sum_Y J'(Y)\sigma'_Y$, where
$J'$ is the resulting interaction defined on nonempty finite
subsets of $\L'$.

Here is the formal definition of the RG map:
\begin{equation}
\frac{e^{\sum_Y J'(Y)\sigma'_Y}}{\sum_{\sigma'}e^{\sum_Y
J'(Y)\sigma'_Y}}=\frac{\sum_{\sigma}\prod_{\y \in
\L'}T_{\y}(\sigma,\sigma'_{\y})e^{\sum_X
J(X)\sigma_X}}{\sum_{\sigma}e^{\sum_X J(X)\sigma_X}},
\end{equation}
where $\sum_{\sigma}$ and $\sum_{\sigma'}$ (normalized sums)
denote the product probability measures on $\{+1,-1\}^{\L}$ and
$\{+1,-1\}^{\L'}$, respectively, and $T_{\y}(\sigma,
\sigma'_{\y})$ denotes a specific RG probability kernel, which
depends only on $\sigma$ through $\y^o$, and satisfies both a
symmetry condition,
\begin{equation}
\label{sym} T_{\y}(\sigma, \sigma'_{\y})=T_{\y}(-\sigma,
-\sigma'_{\y}),
\end{equation}
and a normalization condition,
\begin{equation}
\label{norm} \sum_{\sigma'}T_{\y}(\sigma, \sigma'_{\y})=1
\end{equation}
for every $\sigma$ and every $\y$. Notice that because of
(\ref{sym}) and (\ref{norm}),
\begin{equation}
\label{begin} \sum_{\sigma}T_{\y}(\sigma,
+1)=\sum_{\sigma}T_{\y}(\sigma, -1)=1.
\end{equation}
In the following, we restrict our attention to a special kind of
deterministic probability kernel: There is a function
$\phi_{\y}(\sigma)$ that depends only on $\sigma$ through $\y^o$,
and $T_{\y}(\sigma, \sigma'_{\y})=2\delta(\phi_{\y}(\sigma),
\sigma'_{\y})$.

Our basic assumption is that the original interaction $J$ lies in
a Banach space $\mathcal{B}_r$, with norm
\begin{equation}
||J||_r=\sup_{\x \in \L}\sum_{X: \x \in X}|J(X)|e^{rl(\x, X)},
\end{equation}
where the constant $r\geq 0$, $d$ is a metric on $\L$, and $l(\x,
X)=\sup\{d(\x,\y): \y \in X\}$, with the convention that
$l(\x,\emptyset)=0$. Correspondingly, there is a paired Banach
space $\mathcal{B}_r^*$. As
\begin{eqnarray*}
|\sum_X J_1(X)J_2(X)|& \leq &\sum_X |J_1(X)|\sum_{\x \in
X}\frac{1}{|X|}|J_2(X)|\\&=&\sum_{\x \in \L}\sum_{X: \x \in
X}\frac{1}{|X|}|J_1(X)||J_2(X)|\\& \leq &\sum_{\x \in \L}\sup_{\x
\in X}\frac{1}{|X|}|J_2(X)|e^{-rl(\x, X)}\sum_{X: \x \in
X}|J_1(X)|e^{rl(\x,X)}\\& \leq &\sup_{\x \in \L}\sum_{X: \x \in
X}|J_1(X)|e^{rl(\x,X)} \cdot \sum_{\x \in \L}\sup_{\x \in
X}\frac{1}{|X|}|J_2(X)|e^{-rl(\x,X)},
\end{eqnarray*}
a suitable $\mathcal{B}_r^*$ norm is defined by
\begin{equation}
||J||_r^*=\sum_{\x \in \L}\sup_{\x \in
X}\frac{1}{|X|}|J(X)|e^{-rl(\x, X)}.
\end{equation}
Notice that here $\mathcal{B}_r$ is technically not the dual space
of $\mathcal{B}_r^*$, and $\mathcal{B}_r^*$ is technically not the
dual space of $\mathcal{B}_r$. The spaces $\mathcal{B}_r$ and
$\mathcal{B}_r^*$ are paired in the sense that each one is part of
the dual space of the other, or in other words, each one consists
of continuous linear functions defined on the other. We study the
situation when $||J||_r=0$ (indication of infinite temperature).
We consider the spectrum of the linearization $\mathrm{L}(J)$ of
two commonly used RG transformations, decimation and deterministic
majority rule with odd blocking factor. We show that this spectrum
is of an unusual kind: dense point spectrum for which the adjoint
operators $\mathrm{L}^*(J)$ have no point spectrum at all, but
only residual spectrum.

\begin{remark}
In this paper, spectrum is crudely divided into $3$ types
\cite{Dunford}: For a bounded linear operator $\A$ acting on a
Banach space $\A: B \rightarrow B$,
\begin{enumerate}
\item $\lambda$ is in the point spectrum $\iff$ there exists $B\ni
u\neq 0$, such that $(\A-\lambda)u=0$, i.e.,
$\mathrm{Kernel}(\A-\lambda \I)$ is nontrivial. \item $\lambda$ is
in the residual spectrum $\iff$ $\lambda$ is not in the point
spectrum, and $\overline{\mathrm{Range}(\A-\lambda \I)} \neq B$.
\item $\lambda$ is in the continuous spectrum $\iff$ $\lambda$ is
not in the point spectrum or the residual spectrum,
$\mathrm{Range}(\A-\lambda \I) \neq B$, and
$\overline{\mathrm{Range}(\A-\lambda \I)}=B$.
\end{enumerate}
This definition is too simple to fully capture the notion of
continuous spectrum, but it will be adequate for our purposes.
\end{remark}

Israel \cite{Israel} found the operator bound of $\mathrm{L}(J)$
for decimation in a Banach algebra setting, but did not go into
detail about the spectral type of this transformation. He also
examined the operator bound of $\mathrm{L}(J)$ for majority rule
on the triangular lattice. These results are extended by the
present investigation, which includes the spectral type of
$\mathrm{L}(J)$ and $\mathrm{L}^*(J)$ for decimation (Theorems
\ref{dec} and \ref{addec}) and majority rule (Theorems \ref{maj}
and \ref{admaj}). Even though this investigation is focused on the
RG transformation acting on a system very close to a trivial
interaction, it serves as a test case---after all, if it is
reasonably difficult to compute the spectrum of the RG map, then
one can get an idea of what to expect by computing in a simple
case. If even this case has bizarre spectral properties, then it
may serve as a lesson in what to expect in more general
situations.

\section{Some general results}
\label{general}
\begin{proposition}
The renormalized coupling constants $J'$ are given by the
expression
\begin{equation}
\label{JZ} J'(Z)=\sum_{\sigma'}\sigma'_Z \log(W(\sigma')),
\end{equation}
where $W(\sigma')$ is the frozen block spin partition function
given by
\begin{equation}
W(\sigma')=\sum_{\sigma}\prod_{\y \in
\L'}T_{\y}(\sigma,\sigma'_{\y})e^{\sum_X J(X)\sigma_X}.
\end{equation}
\end{proposition}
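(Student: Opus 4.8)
The plan is to invert the RG map definition by extracting the renormalized coupling constants $J'(Z)$ through a Fourier-type expansion over block spin configurations. Looking at the defining equation of the RG map, the right-hand side equals $W(\sigma')/\sum_{\sigma'}W(\sigma')$ by definition of $W$. The key observation is that any function $f(\sigma')$ on block spin configurations can be expanded in the orthonormal basis $\{\sigma'_Z\}$ indexed by finite subsets $Z$ of $\mathcal{L}'$, with coefficients given by the normalized sum $\sum_{\sigma'} \sigma'_Z f(\sigma')$. I would apply this to $f(\sigma') = \log W(\sigma')$.

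Let me verify the coefficients work out. Taking logarithms of both sides of the RG map: the left side gives $\sum_Y J'(Y)\sigma'_Y - \log\sum_{\sigma'}e^{\sum_Y J'(Y)\sigma'_Y}$, while the right side gives $\log W(\sigma') - \log\sum_{\sigma'}W(\sigma')$. The constant terms (those independent of $\sigma'$) match the empty-set/constant piece of each expansion. For the nontrivial coefficients, I multiply both sides by $\sigma'_Z$ for a fixed nonempty $Z$ and apply the normalized sum $\sum_{\sigma'}$. By orthogonality of the characters $\sigma'_Y$ under the product measure—that is, $\sum_{\sigma'}\sigma'_Y\sigma'_Z$ equals $1$ when $Y=Z$ and $0$ otherwise—the left side collapses to exactly $J'(Z)$, since the normalization constant is killed by $\sum_{\sigma'}\sigma'_Z = 0$ for nonempty $Z$. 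The right side becomes $\sum_{\sigma'}\sigma'_Z \log W(\sigma')$, yielding the claimed formula \eqref{JZ}.

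The steps in order: first write $W(\sigma')$ as in the statement and recognize the RG map says $e^{\sum_Y J'(Y)\sigma'_Y}$ is proportional to $W(\sigma')$; second, take logarithms to convert the product/exponential structure into the linear combination $\sum_Y J'(Y)\sigma'_Y$ plus a constant; third, exploit the orthogonality relations for the block spin characters under the normalized sum $\sum_{\sigma'}$ to project onto the coefficient of $\sigma'_Z$; fourth, check that the normalization constants contribute only to the trivial coefficient and thus drop out for nonempty $Z$, giving the formula.

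The main obstacle I anticipate is bookkeeping around the normalization rather than any deep difficulty: one must confirm that $W(\sigma')$ is strictly positive so that $\log W(\sigma')$ is well defined (which follows since each kernel $T_{\y}$ is nonnegative and the Boltzmann weights are positive), and one must handle the additive constant $\log\sum_{\sigma'}W(\sigma')$ carefully, verifying it affects only the $Z=\emptyset$ term and does not corrupt the formula for nonempty $Z$. Since the interaction $J'$ is defined only on nonempty finite subsets, this is exactly what we want, and the orthogonality computation cleanly isolates each coefficient.
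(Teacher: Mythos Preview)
Your proposal is correct and follows essentially the same approach as the paper: both arguments invert the RG map by Fourier expansion in the characters $\sigma'_Z$ on $\{+1,-1\}^{\mathcal{L}'}$, projecting $\log W(\sigma')$ onto $\sigma'_Z$ via the normalized sum $\sum_{\sigma'}$ and noting that the two normalization constants are annihilated for nonempty $Z$. Your extra remark that $W(\sigma')>0$ (so the logarithm is defined) is a welcome bit of hygiene the paper leaves implicit.
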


\begin{proof}
In order to write down an explicit expression of $J'$, we use
Fourier series on the group $\{+1,-1\}^{\L'}$. If
$H'(\sigma')=-\sum_Y J'(Y) \sigma'_Y$, then $J'(Z)=\sum_{\sigma'}
-H'(\sigma')\sigma'_Z$. We see that
\begin{multline}
J'(Z)=\sum_{\sigma'}\sigma'_Z\log\left(\sum_{\sigma}\prod_{\y \in
\L'}T_{\y}(\sigma,\sigma'_{\y})e^{\sum_X
J(X)\sigma_X}\right)\\+\sum_{\sigma'}\sigma'_Z\log\left(\sum_{\sigma'}e^{\sum_Y
J'(Y)\sigma'_Y}\right)-\sum_{\sigma'}\sigma'_Z\log\left(\sum_{\sigma}e^{\sum_X
J(X)\sigma_X}\right).
\end{multline}
An important observation here is that
$\log\left(\sum_{\sigma'}e^{\sum_Y J'(Y)\sigma'_Y}\right)$ and
$\log\left(\sum_{\sigma}e^{\sum_X J(X)\sigma_X}\right)$ are
constants with respect to $\sigma'_Z$; thus, when summing over all
possible image configurations $\sigma'$, they both vanish.
\end{proof}

\begin{proposition}
Suppose the original interaction $J$ is at infinite temperature.
Then for every subset $W$ of the original lattice and every subset
$Z$ of the image lattice, the partial derivative $\frac{\partial
J'(Z)}{\partial J(W)}$ of the RG transformation is given by the
expression
\begin{equation}
\label{partial} \frac{\partial J'(Z)}{\partial
J(W)}=\sum_{\sigma}\sum_{\sigma'}\prod_{\y \in
\L'}T_{\y}(\sigma,\sigma'_{\y})\sigma_W\sigma'_Z.
\end{equation}
\end{proposition}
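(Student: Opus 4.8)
The plan is to differentiate the closed-form expression (\ref{JZ}) for $J'(Z)$ with respect to the coupling constant $J(W)$ and then specialize to the infinite-temperature point $J=0$. Starting from $J'(Z)=\sum_{\sigma'}\sigma'_Z\log(W(\sigma'))$ and applying the chain rule to the logarithm, I would first obtain
\begin{equation}
\frac{\partial J'(Z)}{\partial J(W)}=\sum_{\sigma'}\sigma'_Z\,\frac{1}{W(\sigma')}\,\frac{\partial W(\sigma')}{\partial J(W)}.
\end{equation}
Since $W(\sigma')=\sum_{\sigma}\prod_{\y\in\L'}T_{\y}(\sigma,\sigma'_{\y})e^{\sum_X J(X)\sigma_X}$ depends on $J(W)$ only through the exponential factor, and $\partial_{J(W)}e^{\sum_X J(X)\sigma_X}=\sigma_W\,e^{\sum_X J(X)\sigma_X}$, differentiating under the normalized sum gives $\frac{\partial W(\sigma')}{\partial J(W)}=\sum_{\sigma}\prod_{\y\in\L'}T_{\y}(\sigma,\sigma'_{\y})\sigma_W\,e^{\sum_X J(X)\sigma_X}$.

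Next I would evaluate both ingredients at $J=0$. There the exponential factor collapses to $1$, so the derivative of $W$ becomes exactly $\sum_{\sigma}\prod_{\y\in\L'}T_{\y}(\sigma,\sigma'_{\y})\sigma_W$, which is precisely the expression appearing inside (\ref{partial}). It then remains only to show that the prefactor $1/W(\sigma')$ equals $1$ at $J=0$, that is, that $W(\sigma')\big|_{J=0}=1$ for every image configuration $\sigma'$.

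Establishing $W(\sigma')\big|_{J=0}=1$ is the heart of the argument, and the idea is that this frozen partition function factorizes over blocks. With $J=0$ we have $W(\sigma')=\sum_{\sigma}\prod_{\y\in\L'}T_{\y}(\sigma,\sigma'_{\y})$. Because each kernel $T_{\y}(\sigma,\sigma'_{\y})$ depends on $\sigma$ only through the block $\y^o$, and the blocks $\{\y^o:\y\in\L'\}$ partition $\L$, the individual factors depend on disjoint sets of spin variables. Under the product probability measure these factors are therefore independent, so the normalized sum of the product splits as $W(\sigma')=\prod_{\y\in\L'}\sum_{\sigma}T_{\y}(\sigma,\sigma'_{\y})$. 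Each factor equals $1$ by (\ref{begin}), which asserts that $\sum_{\sigma}T_{\y}(\sigma,+1)=\sum_{\sigma}T_{\y}(\sigma,-1)=1$. Hence $W(\sigma')\equiv1$ at $J=0$, and substituting this back into the chain-rule formula yields (\ref{partial}).

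The only genuine obstacle is this factorization-and-normalization step for $W(\sigma')\big|_{J=0}$; everything else is routine differentiation. Care must be taken that the normalized sum $\sum_{\sigma}$ really splits across the block partition, which is exactly where the two hypotheses---that $T_{\y}$ sees $\sigma$ only through $\y^o$, and that the blocks tile $\L$---are both used, with the symmetry and normalization conditions (\ref{sym})--(\ref{begin}) then supplying the value $1$ for each block factor.
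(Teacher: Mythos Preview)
Your proposal is correct and follows exactly the paper's approach: differentiate (\ref{JZ}) with respect to $J(W)$ and then set $J=0$. The paper's proof is more terse, simply writing down the resulting quotient and noting $J(X)=0$; you have made explicit the step the paper leaves implicit, namely that the denominator $W(\sigma')\big|_{J=0}$ factorizes over blocks and equals $1$ by (\ref{begin}).
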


\begin{proof}
We take the derivative of both sides of (\ref{JZ}) with respect to
$J(W)$.
\begin{equation}
\frac{\partial J'(Z)}{\partial J(W)}=\sum_{\sigma'}\sigma'_Z
\frac{\sum_{\sigma}\prod_{\y \in
\L'}T_{\y}(\sigma,\sigma'_{\y})e^{\sum_X
J(X)\sigma_X}\sigma_W}{\sum_{\sigma}\prod_{\y \in
\L'}T_{\y}(\sigma,\sigma'_{\y})e^{\sum_X J(X)\sigma_X}}.
\end{equation}
When $J$ is at infinite temperature, i.e., $||J||_r=0$, $J(X)=0$
for every subset $X$ of the original lattice.
\end{proof}

\begin{definition}
For every subset $Z$ of the image lattice, the linearization
$\mathrm{L}(J)$ of the RG transformation for $J$ at infinite
temperature is given by a linear function of $K$ (which indicates
variation from infinite temperature),
\begin{equation}
\mathrm{L}(J)K(Z)=\sum_W \frac{\partial J'(Z)}{\partial J(W)}
K(W),
\end{equation}
where $W$ ranges over all finite subsets of the original lattice.
\end{definition}

\begin{definition}
The adjoint of the linearization $\mathrm{L}^*(J)$ of the RG
transformation for $J$ at infinite temperature is characterized by
the usual correspondence between adjoint operators,
\begin{equation}
\label{corresp} \sum_X K_1(X)\mathrm{L}(J)K_2(X)=\sum_Y
K_2(Y)\mathrm{L}^*(J)K_1(Y),
\end{equation}
where $X$ ranges over all finite subsets of the image lattice, and
$Y$ ranges over all finite subsets of the original lattice.
\end{definition}

\begin{definition}
A constant pure magnetic field is one such that $K(X) = 0$ except
for one-point sets $\{\x\}$, where $K(\{\x\}) = m$, a constant.
\end{definition}

\section{Spectrum of the linearization of decimation transformation and its adjoint at infinite temperature}
\label{decimation}
\begin{proposition}
Consider decimation transformation with blocking factor $b$ and a
probability kernel defined by
\begin{eqnarray}
\phi_{\y}(\sigma)=\sigma_{b\y},
\end{eqnarray}
where $b\y=b(y_1,...,y_d)=(by_1,...,by_d)$. Suppose the original
interaction $J$ is at infinite temperature. Then for every subset
$Z$ of the image lattice, the linearization $\mathrm{L}(J)$ of
this transformation is given by the expression
\begin{equation}
\label{decJ} \mathrm{L}(J) K(Z)=K(b Z),
\end{equation}
where $b Z=\cup_{\z\in Z}\{b \z\}$.
\end{proposition}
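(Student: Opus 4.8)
The plan is to start from the general formula for the partial derivative at infinite temperature, equation \eqref{partial}, and specialize it to the decimation kernel $\phi_{\y}(\sigma)=\sigma_{b\y}$. Since $T_{\y}(\sigma,\sigma'_{\y})=2\delta(\phi_{\y}(\sigma),\sigma'_{\y})=2\delta(\sigma_{b\y},\sigma'_{\y})$, the product $\prod_{\y\in\L'}T_{\y}(\sigma,\sigma'_{\y})$ forces $\sigma'_{\y}=\sigma_{b\y}$ for every image site $\y$. So the first step is to observe that this delta function collapses the double sum in \eqref{partial}: summing over $\sigma'$ pins each block spin $\sigma'_{\y}$ to the decimated site spin $\sigma_{b\y}$, leaving only a sum over $\sigma$.

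Next I would compute $\mathrm{L}(J)K(Z)=\sum_W \frac{\partial J'(Z)}{\partial J(W)}K(W)$ by substituting the collapsed expression. After the delta functions identify $\sigma'_{\y}$ with $\sigma_{b\y}$, the factor $\sigma'_Z=\prod_{\z\in Z}\sigma'_{\z}$ becomes $\prod_{\z\in Z}\sigma_{b\z}=\sigma_{bZ}$, where $bZ=\cup_{\z\in Z}\{b\z\}$ is exactly the set defined in the statement. Thus the partial derivative reduces to $\frac{\partial J'(Z)}{\partial J(W)}=\sum_{\sigma}\sigma_W\sigma_{bZ}$ (up to normalization of the sum over $\sigma$). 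The key computation is then the orthogonality relation for the characters $\sigma_W$ on $\{+1,-1\}^{\L}$: the normalized sum $\sum_{\sigma}\sigma_W\sigma_{bZ}$ equals $1$ when $W=bZ$ as subsets of $\L$ and $0$ otherwise, since $\sigma_W\sigma_{bZ}=\sigma_{W\,\triangle\,bZ}$ and a nonempty product of independent $\pm1$ spins averages to zero.

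Plugging this into the sum over $W$ then picks out the single term $W=bZ$, giving $\mathrm{L}(J)K(Z)=K(bZ)$, which is \eqref{decJ}. I would need to be slightly careful that $bZ$ is always a legitimate (and nonempty) subset of the original lattice whenever $Z$ is a nonempty finite subset of the image lattice, so that the indexing in $K(bZ)$ is well defined; this is immediate from the injectivity of the map $\z\mapsto b\z$.

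The main obstacle, though mild, is bookkeeping rather than conceptual: I must track the normalization factors of $2$ in each kernel $T_{\y}$ against the normalized (probability-measure) sums $\sum_{\sigma}$ and $\sum_{\sigma'}$, and verify that they cancel so that the orthogonality relation yields a clean coefficient of exactly $1$ when $W=bZ$. Once the delta functions are resolved and the character orthogonality is invoked, the result drops out directly; the real content is just the recognition that at infinite temperature the Boltzmann weight $e^{\sum_X J(X)\sigma_X}$ becomes trivial, reducing everything to a Fourier-analytic computation on the hypercube group.
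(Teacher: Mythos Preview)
Your proposal is correct and follows the same route as the paper: you evaluate the general formula \eqref{partial} for the decimation kernel, collapse the $\sigma'$ sum via the delta function to turn $\sigma'_Z$ into $\sigma_{bZ}$, and then use character orthogonality to obtain $\frac{\partial J'(Z)}{\partial J(W)}=\delta(W,bZ)$. The paper's own proof records only the end result of this computation in one line, so your write-up simply makes explicit the bookkeeping the paper leaves implicit.
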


\begin{proof}
We evaluate (\ref{partial}) explicitly:
\begin{eqnarray}
\frac{\partial J'(Z)}{\partial J(W)}=
\sum_{\sigma} \delta(W, b Z)=\delta(W, b Z),
\end{eqnarray}
where $\delta$ is the Kronecker delta function.
\end{proof}

\begin{proposition}
Consider the adjoint of decimation transformation with blocking
factor $b$ and a probability kernel defined by
\begin{eqnarray}
\phi_{\y}(\sigma)=\sigma_{b\y}.
\end{eqnarray}
Suppose the original interaction $J$ is at infinite temperature.
Then for every subset $Z$ of the original lattice, the adjoint of
the linearization $\mathrm{L}^*(J)$ of this transformation is
given by the expression
\begin{eqnarray}
\label{addecJ} \mathrm{L}^*(J)K(Z)=\left\{\begin{array}{ll}
K(Y) & \mbox{if $Z=bY$};\\
0 & \mbox{otherwise}.\end{array} \right.
\end{eqnarray}
\end{proposition}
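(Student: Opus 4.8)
The plan is to substitute the explicit formula for $\mathrm{L}(J)$ obtained in the preceding proposition into the adjoint correspondence (\ref{corresp}) and then reindex the resulting sum. First I would record what (\ref{corresp}) demands in this setting: with $K_2$ an interaction on the original lattice and $K_1$ an interaction on the image lattice,
\[
\sum_X K_1(X)\,\mathrm{L}(J)K_2(X)=\sum_Y K_2(Y)\,\mathrm{L}^*(J)K_1(Y),
\]
where $X$ runs over finite subsets of the image lattice and $Y$ over finite subsets of the original lattice. Applying (\ref{decJ}), namely $\mathrm{L}(J)K_2(X)=K_2(bX)$, the left-hand side becomes $\sum_X K_1(X)\,K_2(bX)$.

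The key step is to reindex this sum via the map $X\mapsto bX=\cup_{\z\in X}\{b\z\}$. I would verify that this map is a bijection from the finite subsets of the image lattice $\L'$ onto those finite subsets of the original lattice $\L$ all of whose sites have every coordinate divisible by $b$ (the embedded sublattice $b\L'$), with inverse given by dividing each coordinate by $b$. Under the substitution $Y=bX$, the left-hand side reads $\sum_{Y} K_2(Y)\,K_1(Y/b)$, where the sum is now restricted to those $Y$ lying in the image of the map.

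Comparing the coefficients of $K_2(Y)$ for each fixed finite $Y\subset\L$ then yields the claim: if $Y=bX$ for the necessarily unique $X$, the coefficient is $K_1(X)$; and if $Y$ is not of this form, $K_2(Y)$ never appears on the left, so its coefficient---hence $\mathrm{L}^*(J)K_1(Y)$---must vanish. Relabeling $K_1\mapsto K$ and the subsets to match the statement of the proposition gives exactly (\ref{addecJ}).

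The main obstacle is the bookkeeping in the reindexing: confirming that $X\mapsto bX$ is genuinely injective, so that no two distinct image-lattice sets collapse onto the same original-lattice set, and correctly identifying its range, so that the ``otherwise'' branch producing zero is justified. Since the identity must hold for all admissible test interactions $K_1,K_2$, I would note that matching coefficients termwise is legitimate; the absolute convergence guaranteed by the $\mathcal{B}_r$/$\mathcal{B}_r^*$ pairing established in the introduction makes the rearrangement of the sum valid.
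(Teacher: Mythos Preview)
Your proposal is correct and follows the same approach as the paper: substitute the formula $\mathrm{L}(J)K_2(X)=K_2(bX)$ into the adjoint correspondence (\ref{corresp}) and read off $\mathrm{L}^*(J)$ by reindexing. The paper's proof is in fact much terser than yours---it records only the substituted sum $\sum_X K_1(X)K_2(bX)$ and leaves the reindexing and coefficient-matching implicit---so your explicit verification that $X\mapsto bX$ is a bijection onto its image and your justification of the termwise comparison add welcome rigor.
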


\begin{proof}
We notice that in this case, (\ref{corresp}) becomes
\begin{eqnarray}
\sum_X K_1(X)\mathrm{L}(J)K_2(X)&=&\sum_X K_1(X)K_2(b X).
\end{eqnarray}
Without loss of generality, we assume
$\mathrm{L}^*(J)K(\{\mathbf{0}\})=0$, which amounts to an index
shift.
\end{proof}

\begin{theorem}[Israel]
\label{dec} Suppose the original interaction $J$ is at infinite
temperature. Then in the Banach Space $\mathcal{B}_r$, the
spectrum of the linearization of the decimation transformation
$\mathrm{L}(J)$ is all point spectrum, $|\lambda| \leq 1$.
\end{theorem}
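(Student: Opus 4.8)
The plan is to prove the two inclusions $\mathrm{spec}(\mathrm{L}(J)) \subseteq \{|\lambda|\le 1\}$ and $\{|\lambda|\le 1\} \subseteq \text{point spectrum}$, whence the spectrum equals the closed unit disk and consists entirely of point spectrum. First I would bound the operator norm. Starting from $\mathrm{L}(J)K(Z)=K(bZ)$ in (\ref{decJ}) and using that the lattice metric scales as $d(b\x,b\y)=b\,d(\x,\y)$, so that $l(\x,X)=\tfrac1b\,l(b\x,bX)$, the substitution $W=bX$ gives
\[
\|\mathrm{L}(J)K\|_r=\sup_{\x}\sum_{W\ni b\x,\; b\mid W}|K(W)|\,e^{(r/b)l(b\x,W)}\le \|K\|_r,
\]
since as $\x$ ranges over $\L$ the point $b\x$ ranges over the sublattice $b\L\subseteq\L$, the divisibility restriction only removes terms, and $e^{(r/b)l}\le e^{rl}$ because $r\ge0$ and $b\ge1$. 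Hence $\|\mathrm{L}(J)\|\le1$, so the spectral radius is at most $1$ and $\mathrm{spec}(\mathrm{L}(J))\subseteq\{|\lambda|\le1\}$.

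Next, the eigenvalue equation $\mathrm{L}(J)K=\lambda K$ reads $K(bZ)=\lambda K(Z)$ for every finite $Z$; thus along any forward orbit $Z,bZ,b^2Z,\dots$ an eigenvector is forced to take the values $K(b^nZ)=\lambda^nK(Z)$, i.e.\ $\mathrm{L}(J)$ acts as a backward shift on each orbit. The obvious attempt---placing such a geometric sequence on the orbit of a multi-point primitive set $S$---fails in $\mathcal{B}_r$ when $r>0$: the diameters of $b^nS$ grow like $b^n$, so the weight $e^{r\,l(\x,b^nS)}$ blows up and overwhelms the factor $|\lambda|^n$. Overcoming this weight is the main obstacle, and the key idea is to use single-point orbits, for which the weight is trivial.

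Concretely, fix a primitive point $\mathbf{p}$ (one not divisible by $b$, e.g.\ $\mathbf{p}=(1,0,\dots,0)$) and, for each $\lambda$ with $|\lambda|\le1$, define $K_\lambda(\{b^n\mathbf{p}\})=\lambda^n$ for $n\ge0$ and $K_\lambda\equiv0$ on all other sets. A direct check using (\ref{decJ}) gives $K_\lambda(bZ)=\lambda K_\lambda(Z)$: on $Z=\{b^n\mathbf{p}\}$ both sides equal $\lambda^{n+1}$, and everywhere else both vanish---here primitivity of $\mathbf{p}$ guarantees that $\{\mathbf{p}\}$ has no $b$-preimage, so no spurious term appears. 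Since the singletons $\{b^n\mathbf{p}\}$ are pairwise disjoint, each site $\x$ belongs to at most one of them, and $l(\x,\{b^n\mathbf{p}\})=d(\x,\x)=0$ at the only relevant site; therefore $\|K_\lambda\|_r=\sup_{n\ge0}|\lambda|^n=1$, finite even on the circle $|\lambda|=1$. Thus every $\lambda$ in the closed unit disk is an eigenvalue.

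Combining the two steps, $\{|\lambda|\le1\}\subseteq\text{point spectrum}\subseteq\mathrm{spec}(\mathrm{L}(J))\subseteq\{|\lambda|\le1\}$, so all inclusions are equalities and the spectrum is entirely point spectrum, filling the closed unit disk. The only genuine difficulty is the weight blow-up for multi-point orbits, which is sidestepped precisely because single-point interactions incur no range penalty in $\|\cdot\|_r$.
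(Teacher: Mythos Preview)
Your proof is correct and follows essentially the same approach as the paper: bound $\|\mathrm{L}(J)\|\le 1$ via the scaling $l(\x,X)\le l(b\x,bX)$, then exhibit for each $|\lambda|\le 1$ the eigenvector supported on singletons $\{b^n\mathbf{p}\}$ with $\mathbf{p}=(1,0,\dots,0)$ and values $\lambda^n$. Your remark about the weight blow-up for multi-point orbits when $r>0$ anticipates what the paper records separately as a proposition after the theorem.
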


\begin{proof}
The proof of this theorem follows from several propositions.
\end{proof}

\begin{proposition}
$||\mathrm{L}(J)||=1$.
\end{proposition}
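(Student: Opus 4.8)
The plan is to read off the operator norm directly from the explicit formula $\mathrm{L}(J)K(Z)=K(bZ)$ of (\ref{decJ}) and to establish the two inequalities $\|\mathrm{L}(J)\|\le 1$ and $\|\mathrm{L}(J)\|\ge 1$ separately. Throughout I identify the image lattice $\L'$ with the original lattice $\L$, so that $\mathrm{L}(J)$ is an endomorphism of $\mathcal{B}_r$, and I use that $d$ is induced by a norm on $\mathbb{Z}^d$, so that $d(b\z,b\y)=b\,d(\z,\y)$ and hence $l(b\z,bZ)=b\,l(\z,Z)\ge l(\z,Z)$ whenever $b\ge 1$.

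For the upper bound I would begin from the definition of the norm, writing
\[
\|\mathrm{L}(J)K\|_r=\sup_{\z\in\L'}\sum_{Z:\,\z\in Z}|K(bZ)|\,e^{rl(\z,Z)}.
\]
The substitution $X=bZ$ is a bijection between the finite sets $Z$ containing a fixed site $\z$ and the finite subsets $X$ of the dilated sublattice $b\mathbb{Z}^d$ containing $b\z$, under which $K(bZ)=K(X)$ and $l(\z,Z)=l(b\z,X)/b\le l(b\z,X)$. Replacing $e^{rl(\z,Z)}$ by the larger weight $e^{rl(b\z,X)}$ and then enlarging the index set from subsets of $b\mathbb{Z}^d$ to all finite sets $X$ containing $b\z$ only increases each sum, so
\[
\sum_{Z:\,\z\in Z}|K(bZ)|\,e^{rl(\z,Z)}\le\sum_{X:\,b\z\in X}|K(X)|\,e^{rl(b\z,X)}\le\|K\|_r.
\]
Taking the supremum over $\z$ gives $\|\mathrm{L}(J)K\|_r\le\|K\|_r$ for every $K$, hence $\|\mathrm{L}(J)\|\le 1$.

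For the lower bound I would exhibit a test interaction attaining the norm. Let $K$ be the interaction with $K(\{\mathbf{0}\})=1$ and $K(X)=0$ for every other finite set $X$. Since $l(\mathbf{0},\{\mathbf{0}\})=0$, a direct evaluation gives $\|K\|_r=1$. Because $b\mathbf{0}=\mathbf{0}$ and $bZ=\{\mathbf{0}\}$ forces $Z=\{\mathbf{0}\}$, the formula (\ref{decJ}) shows that $\mathrm{L}(J)K$ is again the indicator of $\{\mathbf{0}\}$, so $\|\mathrm{L}(J)K\|_r=1$ as well. Therefore $\|\mathrm{L}(J)\|\ge\|\mathrm{L}(J)K\|_r/\|K\|_r=1$, and combining the two bounds yields $\|\mathrm{L}(J)\|=1$.

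The two norm evaluations on single-point sets are routine, and the lower bound is essentially immediate (indeed $\{\mathbf{0}\}$ turns out to be a fixed point of $\mathrm{L}(J)$, consistent with $1$ lying in the point spectrum). The one step that genuinely requires care is the upper bound, where one must check that the change of variables $X=bZ$ together with $l(\z,Z)\le l(b\z,bZ)$ correctly controls the exponential weights, and that passing from sets contained in $b\mathbb{Z}^d$ to all finite sets only enlarges the sum. I expect this metric-scaling bookkeeping to be the main—though mild—obstacle; everything else follows at once from (\ref{decJ}).
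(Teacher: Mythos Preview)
Your argument is correct and follows essentially the same route as the paper: the upper bound via the change of variables $X=bZ$ together with $l(\z,Z)\le l(b\z,bZ)$ is exactly the paper's computation, and your lower bound exhibits an eigenvector with eigenvalue $1$ just as the paper does. The only cosmetic difference is that the paper uses a constant pure magnetic field ($K(\{\x\})=m$ for all $\x$) rather than your indicator of $\{\mathbf 0\}$, but both are fixed by $\mathrm{L}(J)$ and have the same effect.
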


\begin{proof}
We check that for each fixed $\x \in \L$, $\sum_{X: \x \in
X}|\mathrm{L}(J) K(X)|e^{rl(\x, X)}\leq ||K||_r$, which would
imply $||\mathrm{L}(J)||\leq 1$. By (\ref{decJ}),
\begin{equation*}
\sum_{X: \x \in X}|\mathrm{L}(J) K(X)|e^{rl(\x, X)}=\sum_{X: \x
\in X}|K(b X)|e^{rl(\x, X)}
\end{equation*}
\begin{equation}
\leq\sum_{X: b \x \in b X}|K(b X)|e^{rl(b \x, b X)}\leq \sum_{X: b
\x \in X}|K(X)|e^{rl(b \x, X)}\leq ||K||_r.
\end{equation}
The claim is verified when we realize that
a constant pure magnetic field is an eigenvector with eigenvalue
$1$.
\end{proof}

\begin{corollary}
Every eigenvalue $\lambda$ of $\mathrm{L}(J)$ satisfies
$|\lambda|\leq 1$.
\end{corollary}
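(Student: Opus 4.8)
The plan is to deduce this immediately from the preceding proposition, which establishes $||\mathrm{L}(J)||=1$. The general principle being invoked is the elementary fact that every eigenvalue of a bounded linear operator on a Banach space is bounded in modulus by the operator norm; here that norm happens to equal $1$, so the conclusion $|\lambda|\leq 1$ follows at once.

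Concretely, suppose $\lambda$ is an eigenvalue of $\mathrm{L}(J)$. By definition (1) in the Remark there is a nonzero interaction $K\in\mathcal{B}_r$ with $\mathrm{L}(J)K=\lambda K$. Taking the $\mathcal{B}_r$ norm of both sides and using the homogeneity of the norm gives $|\lambda|\,||K||_r=||\lambda K||_r=||\mathrm{L}(J)K||_r\leq ||\mathrm{L}(J)||\,||K||_r=||K||_r$. Since $K\neq 0$ forces $||K||_r>0$, I can divide through by $||K||_r$ to conclude $|\lambda|\leq 1$.

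There is no genuine obstacle here: the corollary is a routine consequence of the operator-norm bound, and all the substantive work was carried out in the proposition computing $||\mathrm{L}(J)||=1$. The only point worth flagging is that this argument bounds eigenvalues from above but says nothing about which $\lambda$ with $|\lambda|\leq 1$ actually occur; establishing that the full closed unit disk is point spectrum, as claimed in Theorem \ref{dec}, will require separately exhibiting eigenvectors, which is the content of the remaining propositions rather than of this corollary.
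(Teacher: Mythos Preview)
Your proof is correct and matches the paper's approach exactly: the corollary is stated without proof immediately after the proposition $||\mathrm{L}(J)||=1$, relying on precisely the standard fact that eigenvalues of a bounded operator are dominated in modulus by its norm. Your explicit verification of this inequality is just a spelled-out version of what the paper leaves implicit.
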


\begin{proposition}
Every $|\lambda| \leq 1$ is an eigenvalue.
\end{proposition}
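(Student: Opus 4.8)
The plan is to exhibit, for each $\lambda$ with $|\lambda|\le 1$, an explicit nonzero eigenvector $K\in\mathcal{B}_r$ solving the eigenvalue equation. By (\ref{decJ}) the equation $\mathrm{L}(J)K=\lambda K$ reads $K(bZ)=\lambda K(Z)$ for every finite subset $Z$ (identifying $\L$ and $\L'$ with $\mathbb{Z}^d$). The guiding observation is that decimation carries one-point sets to one-point sets, $\{\z\}\mapsto\{b\z\}$, and that on one-point sets the exponential weight is harmless: since $l(\x,\{\x\})=0$, the set $\{\x\}$ enters $\|K\|_r$ with weight $e^{rl(\x,\{\x\})}=1$. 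Concentrating the eigenvector on one-point sets therefore sidesteps the rapid growth of the weight (of order $e^{rb^n}$ on expanding sets) that would otherwise destroy summability.

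First I would fix the arithmetic structure of one-point sets under $\z\mapsto b\z$. Every nonzero $\z\in\mathbb{Z}^d$ has a unique representation $\z=b^k\z_0$ with $k\ge 0$ and $\z_0$ \emph{primitive}, meaning $\z_0\notin b\mathbb{Z}^d$ (some coordinate of $\z_0$ is not divisible by $b$). For a fixed primitive $\z_0\ne\mathbf{0}$ this yields the forward orbit $O=\{\{b^k\z_0\}:k\ge0\}$, which is one-sided precisely because primitivity forbids any lattice preimage of $\z_0$ under multiplication by $b$. I would then define $K$ by $K(\{b^k\z_0\})=\lambda^k$ for $k\ge0$ and $K(X)=0$ for every other finite $X$ (with the convention $0^0=1$, which covers $\lambda=0$).

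It remains to verify the two required properties. For the eigenvalue equation: if $Z=\{b^k\z_0\}\in O$, then $bZ=\{b^{k+1}\z_0\}$ and indeed $K(bZ)=\lambda^{k+1}=\lambda K(Z)$; if $Z\notin O$, then $K(Z)=0$ and I must show $K(bZ)=0$, i.e.\ $bZ\notin O$. Since $O$ consists of one-point sets, $bZ\in O$ would force $Z$ to be a one-point set $\{\mathbf{v}\}$ with $b\mathbf{v}=b^k\z_0$; the case $k=0$ contradicts primitivity of $\z_0$, while $k\ge1$ gives $\mathbf{v}=b^{k-1}\z_0$, i.e.\ $Z\in O$, a contradiction. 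Hence $K$ satisfies $\mathrm{L}(J)K=\lambda K$ on all of $\mathcal{B}_r$.

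The finiteness of $\|K\|_r$ is the real crux. Because the support sets $\{b^k\z_0\}$ are pairwise disjoint, each site $\x$ lies in at most one of them, so
\[
\|K\|_r=\sup_{\x\in\L}\sum_{X:\,\x\in X}|K(X)|e^{rl(\x,X)}=\sup_{k\ge0}|\lambda|^{k},
\]
using $l(b^k\z_0,\{b^k\z_0\})=0$. Since $|\lambda|\le1$ this supremum equals $1$, so $K\in\mathcal{B}_r$ with $\|K\|_r=1$, and $\lambda$ is an eigenvalue. I expect this last check to be the only genuine obstacle: the weight blows up like $e^{rb^n}$ on generic multi-point orbits, so building eigenvectors there fails for $r>0$. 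Restricting the support to a single one-sided orbit of one-point sets, where the weight is identically $1$, is exactly the device that keeps the norm bounded while enforcing the geometric relation $K(b^k\z_0)=\lambda^k$ demanded by the eigenvalue equation.
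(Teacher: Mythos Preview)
Your proof is correct and takes essentially the same approach as the paper: the paper's eigenvector is your construction with the specific primitive point $\z_0=(1,0,\ldots,0)$, while you allow an arbitrary primitive $\z_0$ and supply the verifications (of the eigenvalue equation on and off the orbit, and of $\|K\|_r<\infty$) that the paper leaves implicit.
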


\begin{proof}
For a generic $\lambda$, we display one eigenvector here. In fact,
with some further thought, it is not hard to show that there are
infinitely many eigenvectors for each $\lambda$. The eigenvector
$K$ is defined by
\begin{equation}
K(\{(b^n,0,...,0)\})=\lambda^n K(\{(1,0,...,0)\})=\lambda^n
\end{equation}
for $n\geq 0$, and for all the other subsets $X$, $K(X)$ is set to
zero.
\end{proof}

Moreover, we have stricter restrictions on the eigenvector $K$
that lies in a Banach space $\mathcal{B}_r: r>0$.

\begin{proposition}
\label{r} For $\lambda \neq 0$ and for every finite subset
$|X|>1$, we must have $K(X)=0$ for the eigenvector $K$.
\end{proposition}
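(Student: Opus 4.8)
The plan is to exploit the eigenvector relation together with the fact that, for $r>0$, the exponential weight $e^{rl(\x,X)}$ in the norm $\|\cdot\|_r$ penalizes sets of large diameter very severely. Since $K$ is an eigenvector with eigenvalue $\lambda$, combining $\mathrm{L}(J)K=\lambda K$ with (\ref{decJ}) yields the functional relation
\begin{equation}
K(bZ)=\lambda K(Z)
\end{equation}
for every finite subset $Z$. Iterating this relation $n$ times gives $K(b^nX)=\lambda^n K(X)$, where $b^nX$ is the set $X$ dilated by the factor $b^n$. This is the engine of the argument: dilating $X$ changes $K$ only by a factor $\lambda^n$, while it inflates the diameter by $b^n$.

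First I would fix a set $X$ with $|X|>1$ and suppose, toward a contradiction, that $K(X)\neq 0$. I would choose a point $\x_0\in X$; because $|X|>1$ there is a second point of $X$ at positive distance from $\x_0$, so $\delta:=l(\x_0,X)>0$. Since the metric $d$ on $\mathbb{Z}^d$ scales homogeneously under the dilation $\x\mapsto b\x$, the site $b^n\x_0$ lies in $b^nX$ and satisfies $l(b^n\x_0,b^nX)=b^n\delta$.

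Then I would bound the norm from below by retaining only the single term coming from the dilated set. Evaluating the defining supremum of $\|K\|_r$ at the particular site $b^n\x_0$ and keeping only the (nonnegative) contribution of $X'=b^nX$ gives
\begin{equation}
\|K\|_r\;\geq\;|K(b^nX)|\,e^{r\,l(b^n\x_0,\,b^nX)}\;=\;|\lambda|^n\,|K(X)|\,e^{r b^n\delta}.
\end{equation}
Because $r>0$, $\delta>0$, and $b\geq 2$, the factor $e^{rb^n\delta}$ grows like $\exp(\mathrm{const}\cdot b^n)$, which overwhelms the merely geometric factor $|\lambda|^n$ (recall $\lambda\neq 0$, and in any case $|\lambda|\leq 1$ by the preceding corollary). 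Hence the right-hand side tends to $\infty$ as $n\to\infty$, contradicting $K\in\mathcal{B}_r$. I would conclude that $K(X)=0$ for every $X$ with $|X|>1$.

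The only delicate point is the geometric claim $l(b^n\x_0,b^nX)=b^n\delta$, namely that dilation by $b$ multiplies $l$-distances by $b$ (or at least by a factor bounded below by a constant times $b^n$); this is where the homogeneity of the underlying metric is used, and it is precisely what makes the exponential weight blow up faster than any power of $\lambda$. Everything else is the routine iteration of the eigenvalue relation and a one-term lower bound on the norm.
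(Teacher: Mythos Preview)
Your proof is correct and follows essentially the same route as the paper's: iterate the eigenvector relation to get $K(b^nX)=\lambda^n K(X)$, use the scaling $l(b^n\x_0,b^nX)=b^n l(\x_0,X)$, and note that the single term $|\lambda|^n|K(X)|e^{rb^n l(\x_0,X)}$ in $\|K\|_r$ forces $K(X)=0$. You have simply spelled out the contradiction in more detail than the paper does.
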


\begin{proof}
This follows from the observation that we can always pick a site,
say $\x$, in $X$, such that $l(\x,X)>0$. As a result, $b^n \x$ is
a site in $b^n X$, and $l(b^{n} \x, b^{n} X)=b^{n}l(\x, X)>0$.
Since $\mathrm{L}(J) K(X)=K(b X)$, we must have $K(b^n
X)=\lambda^n K(X)$. Then due to the fact that $K$ is an
eigenvector, we need to ensure that
\begin{equation}
|\lambda|^n |K(X)|e^{rb^n l(\x, X)}<\infty.
\end{equation}
\end{proof}

The following statements concern translation-invariant
Hamiltonians. In this case, it is believed that the RG map should
be almost a contraction near the trivial fixed point. Almost means
that except for a few degrees of freedom (maybe just one) it
should be a contraction. If one restricts oneself to even
interactions, then it should actually be a
contraction---reflecting the fact that if we start with an even
interaction in the very high temperature phase, then the RG map
would drive it to the zero interaction.

\begin{proposition}
\label{tran} Restricted to the translation-invariant even subspace
of $\mathcal{B}_r: r=0$, the point spectrum of $\L$ is
$|\lambda|<1$.
\end{proposition}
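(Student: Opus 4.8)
The plan is to prove the two inclusions separately. We already know from the preceding propositions that every eigenvalue satisfies $|\lambda|\le 1$, so it suffices to show (a) that no $\lambda$ with $|\lambda|=1$ is an eigenvalue once we restrict to the even translation-invariant subspace of $\mathcal{B}_r$ with $r=0$, which pushes the point spectrum into the open disk, and (b) that every $\lambda$ with $|\lambda|<1$ is genuinely realized by an eigenvector lying in this subspace. Throughout I use that for translation-invariant $K$ the norm simplifies to $\|K\|_0=\sum_{X\ni\mathbf{0}}|K(X)|$, and that $\mathrm{L}(J)$ preserves both translation invariance and evenness (since $\mathrm{L}(J)K(Z)=K(bZ)$ and $|bZ|=|Z|$).

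For step (a), the main step, suppose toward a contradiction that $\mathrm{L}(J)K=\lambda K$ with $|\lambda|=1$ and $K\neq 0$ even and translation-invariant. Because $K$ is even it vanishes on all sets of odd cardinality, in particular on the one-point sets that carry the eigenvalue $1$ in the unrestricted problem; hence any $Z_0$ with $K(Z_0)\neq 0$ has $|Z_0|\ge 2$, and by translation invariance I may translate $Z_0$ so that $\mathbf{0}\in Z_0$ without changing $K(Z_0)$. Iterating the eigenrelation $K(bZ)=\lambda K(Z)$ gives $K(b^nZ_0)=\lambda^n K(Z_0)$, so $|K(b^nZ_0)|=|K(Z_0)|>0$ for every $n\ge 0$. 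The key point is that each $b^nZ_0$ still contains $\mathbf{0}$ (as $b\mathbf{0}=\mathbf{0}$), while the presence of a second point $\mathbf{v}\neq\mathbf{0}$ in $Z_0$ forces $d(b^n\mathbf{v},\mathbf{0})=b^n d(\mathbf{v},\mathbf{0})\to\infty$, so the sets $b^nZ_0$ are pairwise distinct. Then $\|K\|_0=\sum_{X\ni\mathbf{0}}|K(X)|\ge\sum_{n\ge 0}|K(b^nZ_0)|=\infty$, contradicting $K\in\mathcal{B}_0$. This is precisely where $r=0$ makes the argument delicate: there is no exponential weight $e^{rl}$ to exploit as in Proposition \ref{r}, so the contradiction must come from accumulating infinitely many equal-magnitude terms in a single sum, which is exactly what evenness (removing the singleton responsible for $\lambda=1$) and translation invariance (pinning the whole forward orbit onto sets through $\mathbf{0}$) are there to guarantee.

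For step (b), I exhibit for each $|\lambda|<1$ an explicit eigenvector supported on two-point sets. Fix the primitive vector $\mathbf{u}=(1,0,\dots,0)$ and define $K$ to take the value $\lambda^k$ on the translation class of $\{\mathbf{0},b^k\mathbf{u}\}$ for $k\ge 0$, and to vanish on every other two-point shape and on every set of cardinality $\neq 2$. This $K$ is even and translation-invariant by construction, and it satisfies $K(bZ)=\lambda K(Z)$ for every $Z$: on the distinguished orbit this is the recursion $\lambda^{k+1}=\lambda\cdot\lambda^k$, while for any other two-point $Z$ the set $bZ$ has difference vector divisible by $b$ and so can only have the shape $\{\mathbf{0},b^k\mathbf{u}\}$ if $Z$ already had shape $\{\mathbf{0},b^{k-1}\mathbf{u}\}$, whence $K(Z)$ and $K(bZ)$ both vanish off the orbit. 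Finally $\|K\|_0=\sum_{X\ni\mathbf{0}}|K(X)|=2\sum_{k\ge 0}|\lambda|^k=2/(1-|\lambda|)<\infty$, so $K\in\mathcal{B}_0$ and $\lambda$ is an eigenvalue of the restricted operator. The hard part is step (a) and, within it, producing the contradiction using only the unweighted $r=0$ norm; the argument succeeds not through any decay estimate but through the structural fact that evenness eliminates the single eigenvector responsible for $|\lambda|=1$ and translation invariance collapses the infinite forward orbit of any surviving support set into one divergent sum.
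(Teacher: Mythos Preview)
Your proof is correct and follows essentially the same approach as the paper. In step~(a) the paper likewise uses that an even eigenvector must be supported on some $X$ with $|X|>1$, then invokes translation invariance to produce infinitely many distinct sets through $\mathbf{0}$ on which $|K|$ equals a fixed nonzero constant; your device of first translating $Z_0$ to contain $\mathbf{0}$ and then exploiting $b\,\mathbf{0}=\mathbf{0}$ is a clean equivalent of the paper's ``same shape as $b^nX$'' argument. In step~(b) the paper seeds the eigenvector with all nearest-neighbor pairs and extends by $K(bX)=\lambda K(X)$, whereas you use only the single orbit $\{\mathbf{0},b^k\mathbf{u}\}$; both constructions work and yield finite $\|\cdot\|_0$ norm for $|\lambda|<1$.
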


\begin{proof}
For $|\lambda|<1$, the eigenvector $K$ may be defined by
\begin{equation}
K(\{\x,\y\})=1
\end{equation}
for $\x,\y$ that are nearest-neighbors, and in general,
$K(bX)=\lambda K(X)$. However, no such eigenvector would work for
$|\lambda|=1$. Suppose the nontrivial eigenvector $K(X)=m\neq 0$
for some finite subset $X:|X|>1$. Due to translation-invariance,
for arbitrary $n$, all sets $Y$ with the same shape as $b^nX$ will
have $|K(Y)|=m$. In particular, there will be infinitely many
subsets $Z$ containing $\mathbf{0}$ with $|K(Z)|=m$, which implies
$||K||_r=\infty$.
\end{proof}

\begin{proposition}
Restricted to the translation-invariant even subspace of
$\mathcal{B}_r: r>0$, the point spectrum of $\L$ is $\lambda=0$.
\end{proposition}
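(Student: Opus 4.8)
The plan is to prove that the point spectrum is exactly $\{0\}$, in sharp contrast to the disk $|\lambda|<1$ found in Proposition \ref{tran} for $r=0$; the extra weight $e^{rl(\x,X)}$ with $r>0$ is precisely what collapses the spectrum down to the single kernel value. I would split the argument into two halves: showing that $0$ is an eigenvalue, and showing that no $\lambda\neq 0$ can be. For the first half I would exhibit the nearest-neighbor coupling $K$ defined by $K(\{\x,\y\})=1$ whenever $\x,\y$ are nearest neighbors and $K(X)=0$ for every other finite subset $X$. This $K$ is supported on two-point sets, hence even, and is translation-invariant by construction. To confirm $K\in\mathcal{B}_r$ for $r>0$, I would observe that for each fixed $\x$ the only sets containing $\x$ on which $K$ does not vanish are the $2d$ bonds $\{\x,\x\pm\mathbf{e}_i\}$, each with $l(\x,X)=1$, so that $\|K\|_r=2d\,e^r<\infty$. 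The eigenvalue equation then amounts to $\mathrm{L}(J)K(X)=K(bX)=0$ for all $X$, and this holds because any two distinct sites of $bX$ are separated by distance at least $b\geq 2$, so $bX$ is never a nearest-neighbor bond.

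For the second half I would invoke Proposition \ref{r}: any eigenvector $K$ belonging to a nonzero eigenvalue must satisfy $K(X)=0$ for every set with $|X|>1$, so it is supported entirely on singletons. But the even subspace contains no interaction on odd-cardinality sets and in particular assigns $0$ to every singleton. The two constraints together force $K\equiv 0$, so no $\lambda\neq 0$ admits a nonzero eigenvector in this subspace. Combined with the first half, this shows the point spectrum is $\{0\}$. It is worth noting that this is exactly where evenness matters: in the full translation-invariant space the constant magnetic field is an eigenvector at $\lambda=1$, and it is evenness that removes it.

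I do not anticipate a genuine analytic obstacle, since Proposition \ref{r} already carries all the weight for $\lambda\neq 0$. The steps that need care are of the bookkeeping variety: verifying that the nearest-neighbor $K$ indeed lands in the even, translation-invariant part of $\mathcal{B}_r$ with finite norm, and checking the one-line geometric fact that the rescaled set $bX$ can never be a nearest-neighbor bond, which is what places $K$ in the kernel of $\mathrm{L}(J)$ and makes $0$ an honest eigenvalue rather than merely a boundary point.
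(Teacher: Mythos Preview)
Your proposal is correct and follows the same approach as the paper, which simply cites Propositions~\ref{r} and~\ref{tran}. You have spelled out exactly what those citations amount to: the nearest-neighbor coupling from the proof of Proposition~\ref{tran} (specialized to $\lambda=0$) lies in $\mathcal{B}_r$ for $r>0$ and is annihilated by $\mathrm{L}(J)$, while Proposition~\ref{r} together with evenness kills every $\lambda\neq 0$.
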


\begin{proof}
This follows from Propositions \ref{r} and \ref{tran}.
\end{proof}

\begin{theorem}
\label{addec} Suppose the original interaction $J$ is at infinite
temperature. Then in the Banach Space $\mathcal{B}_r^*$, the
spectrum of the adjoint of the linearization of the decimation
transformation $\mathrm{L}^*(J)$ is all residual spectrum,
$|\lambda|\leq 1$.
\end{theorem}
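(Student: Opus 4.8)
The plan is to verify the two defining features of residual spectrum from the Remark---injectivity of $\mathrm{L}^*(J)-\lambda\I$ together with non-density of its range---for every $\lambda$ in the closed unit disk, and then to confine the spectrum to that disk by an operator-norm bound. First I would record that $||\mathrm{L}^*(J)||=1$ on $\mathcal{B}_r^*$, by the same scheme used for $||\mathrm{L}(J)||=1$: since $\mathrm{L}^*(J)K$ is supported on dilated sets $Z=bY$, for which $|bY|=|Y|$ and $l(b\x',bY)=b\,l(\x',Y)\geq l(\x',Y)$, reindexing $\x=b\x'$ gives $||\mathrm{L}^*(J)K||_r^*\leq||K||_r^*$, with equality attained by a constant pure magnetic field. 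Hence $\sigma(\mathrm{L}^*(J))\subseteq\{|\lambda|\leq 1\}$, and it remains to place every such $\lambda$ in the residual spectrum.

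For injectivity, suppose $(\mathrm{L}^*(J)-\lambda\I)K=0$. Reading (\ref{addecJ}) off on sets not of the form $bY$ gives $\lambda K(Z)=0$, while on sets $Z=bY$ it gives $K(Y)=\lambda K(bY)$. Writing an arbitrary nonempty $Z$ (after the index shift that removes the fixed singleton $\{\mathbf{0}\}$) uniquely as $Z=b^{k}W$ with $W$ not divisible by $b$, iteration yields $K(Z)=\lambda^{-k}K(W)=0$ when $\lambda\neq 0$, whereas $\lambda=0$ forces $K(Y)=0$ directly from $\mathrm{L}^*(J)K(bY)=K(Y)$; in either case $K\equiv 0$. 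Thus $\mathrm{L}^*(J)-\lambda\I$ is injective and $\lambda$ is not in the point spectrum.

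The non-density of the range is where the pairing between $\mathcal{B}_r$ and $\mathcal{B}_r^*$ does the work, and this is the step I expect to require the most care. By Theorem \ref{dec} every $|\lambda|\leq 1$ is an eigenvalue of $\mathrm{L}(J)$ on $\mathcal{B}_r$, so there is a nonzero $v\in\mathcal{B}_r$ with $(\mathrm{L}(J)-\lambda\I)v=0$; for $r>0$ one takes the singleton-supported eigenvector $v(\{(b^{n},0,\dots,0)\})=\lambda^{n}$ permitted by Proposition \ref{r}, which has finite norm. Using the adjoint correspondence (\ref{corresp}) in the form $\langle \mathrm{L}^*(J)K,v\rangle=\langle K,\mathrm{L}(J)v\rangle$, I compute, for every $K\in\mathcal{B}_r^*$, that $\langle(\mathrm{L}^*(J)-\lambda\I)K,v\rangle=\langle K,(\mathrm{L}(J)-\lambda\I)v\rangle=0$. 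Since $v\neq 0$ defines, via the Hölder-type inequality $|\langle K,v\rangle|\leq||v||_r\,||K||_r^*$ established in the Introduction, a nonzero continuous linear functional on $\mathcal{B}_r^*$, this functional annihilates $\mathrm{Range}(\mathrm{L}^*(J)-\lambda\I)$ and hence its closure.

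Therefore $\overline{\mathrm{Range}(\mathrm{L}^*(J)-\lambda\I)}\neq\mathcal{B}_r^*$, which with injectivity places $\lambda$ in the residual spectrum; collecting the three steps shows that the spectrum of $\mathrm{L}^*(J)$ is exactly $\{|\lambda|\leq 1\}$ and is entirely residual. The delicate point throughout is that $\mathcal{B}_r$ is only part of the dual of $\mathcal{B}_r^*$, so I must use genuine elements $v\in\mathcal{B}_r$ as separating functionals rather than appeal to the full dual; fortunately the explicit eigenvectors furnished by Theorem \ref{dec} supply exactly such elements.
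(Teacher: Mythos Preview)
Your argument is correct, with one small slip: a constant pure magnetic field is \emph{not} in $\mathcal{B}_r^*$ (its $||\cdot||_r^*$ norm is an infinite sum of a fixed positive constant), so it cannot witness equality in $||\mathrm{L}^*(J)||=1$. This does no damage, since only the inequality $||\mathrm{L}^*(J)||\leq 1$ is needed to confine the spectrum, and that part of your estimate stands. (The paper, too, records only $||\mathrm{L}^*(J)||\leq 1$.)

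Your route to non-density of the range is genuinely different from the paper's. The paper works by hand: it fixes the element $K\in\mathcal{B}_r^*$ supported on the single singleton $\{(1,0,\dots,0)\}$ and shows, via a telescoping triangle-inequality estimate along the orbit $\{(b^n,0,\dots,0)\}$, that any $S$ with $(\lambda\I-\mathrm{L}^*(J))S$ within distance $1/2$ of $K$ must have $|S(\{(b^n,0,\dots,0)\})|\geq 1/2$ for every $n$, forcing $||S||_r^*=\infty$. You instead exploit the pairing: the explicit eigenvector $v\in\mathcal{B}_r$ of $\mathrm{L}(J)$ furnishes a nonzero bounded linear functional on $\mathcal{B}_r^*$ that annihilates $\mathrm{Range}(\mathrm{L}^*(J)-\lambda\I)$. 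Your approach is shorter and makes transparent \emph{why} the residual spectrum of $\mathrm{L}^*(J)$ coincides with the point spectrum of $\mathrm{L}(J)$; the paper's direct estimate is self-contained, needing neither Theorem~\ref{dec} nor the adjoint identity~(\ref{corresp}). One point worth noting in your write-up: under the paper's index-shift convention $\mathrm{L}^*(J)K(\{\mathbf{0}\})=0$, the identity~(\ref{corresp}) could in principle fail by a term at $\{\mathbf{0}\}$, but your chosen $v$ has $v(\{\mathbf{0}\})=0$, so the pairing computation goes through unchanged.
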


\begin{proof}
The proof of this theorem follows from several propositions.
\end{proof}

\begin{proposition}
$||\mathrm{L}^*(J)|| \leq 1$.
\end{proposition}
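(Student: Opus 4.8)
The plan is to compute $||\mathrm{L}^*(J)K||_r^*$ directly from the definition of the $\mathcal{B}_r^*$ norm together with the explicit formula (\ref{addecJ}), and then to compare it term-by-term with $||K||_r^*$. First I would write
\begin{equation*}
||\mathrm{L}^*(J)K||_r^*=\sum_{\x \in \L}\sup_{\x \in Z}\frac{1}{|Z|}|\mathrm{L}^*(J)K(Z)|e^{-rl(\x, Z)},
\end{equation*}
and note that by (\ref{addecJ}) the value $\mathrm{L}^*(J)K(Z)$ vanishes unless $Z=bY$ for some finite subset $Y$ of the original lattice. Hence only sets of the form $Z=bY$ can contribute to the supremum attached to each site $\x$.

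The key step is to identify which sites $\x$ contribute and to rescale. A set $Z=bY$ contains $\x$ precisely when $\x \in b\L$, say $\x=b\mathbf{u}$ with $\mathbf{u}\in Y$; for $\x\notin b\L$ every surviving term is zero and the whole supremum vanishes. For $\x=b\mathbf{u}$ I would use the two scaling identities $|bY|=|Y|$ and $l(b\mathbf{u},bY)=b\,l(\mathbf{u},Y)$ (the latter being the same metric-scaling property already exploited in the bound $||\mathrm{L}(J)||=1$) to rewrite the contribution of $\x$ as $\sup_{\mathbf{u}\in Y}\frac{1}{|Y|}|K(Y)|e^{-rb\,l(\mathbf{u},Y)}$. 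Since $\mathbf{u}=\x/b$ ranges over all of $\L$ as $\x$ ranges over $b\L$, reindexing the outer sum by $\mathbf{u}$ gives
\begin{equation*}
||\mathrm{L}^*(J)K||_r^*=\sum_{\mathbf{u}\in \L}\sup_{\mathbf{u}\in Y}\frac{1}{|Y|}|K(Y)|e^{-rb\,l(\mathbf{u},Y)}.
\end{equation*}

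Finally, because $b\geq 1$, $r\geq 0$, and $l(\mathbf{u},Y)\geq 0$, we have $e^{-rb\,l(\mathbf{u},Y)}\leq e^{-r\,l(\mathbf{u},Y)}$ in every term, so the last display is bounded above by $||K||_r^*$, which yields $||\mathrm{L}^*(J)||\leq 1$. I expect the only delicate point to be the bookkeeping in the middle step: one must check that sites outside $b\L$ drop out entirely and that the surviving sites $b\mathbf{u}$ reindex bijectively onto the full lattice, so that no term of $||K||_r^*$ is lost and the comparison is genuinely term-by-term. The rescaling of $l$ and of $|Z|$ under $Y\mapsto bY$ is routine once this indexing is arranged correctly, and I note that for $r=0$ the inequality becomes an equality, consistent with the adjoint having operator norm exactly $1$.
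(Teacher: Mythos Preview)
Your argument is correct and is essentially the same computation the paper carries out in its two-line display: restrict to sets $Z=bY$ (the only nonvanishing contributions), reindex the outer sum via $\x=b\mathbf{u}$, use $|bY|=|Y|$, and then bound $e^{-rl(b\mathbf{u},bY)}\le e^{-rl(\mathbf{u},Y)}$. You have simply made explicit the bookkeeping (sites outside $b\L$ drop out, the surviving sites reindex bijectively) that the paper suppresses.
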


\begin{proof}
By (\ref{addecJ}),
\begin{eqnarray}
\sum_{\x \in \L}\sup_{\x \in X}\frac{1}{|X|}|\mathrm{L}^*(J)
K(X)|e^{-rl(\x, X)}&\leq& \sum_{b\x \in \L}\sup_{b\x \in bX}
\frac{1}{|X|}|K(X)|e^{-rl(b\x, bX)}\notag\\&\leq& \sum_{\x \in
\L}\sup_{\x \in X}\frac{1}{|X|}|K(X)|e^{-rl(\x, X)}.
\end{eqnarray}
\end{proof}

\begin{proposition}
For every $\lambda \neq 0$, there is no nontrivial eigenvector.
\end{proposition}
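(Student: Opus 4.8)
The plan is to analyze the eigenvalue equation $\mathrm{L}^*(J)K=\lambda K$ directly from the explicit formula (\ref{addecJ}), exploiting the hypothesis $\lambda\neq 0$. The first observation is that (\ref{addecJ}) makes $\mathrm{L}^*(J)K$ vanish on every set that is \emph{not} of the form $bY$. Reading the eigenvalue equation on such a set $Z$ therefore gives $0=\mathrm{L}^*(J)K(Z)=\lambda K(Z)$, and since $\lambda\neq 0$ we get $K(Z)=0$. In other words, any eigenvector must be supported on the collection of sets all of whose sites have every coordinate divisible by $b$.

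Next I would read the equation on the sets that do survive. If $Z=bY$ with $Z\neq\{\mathbf{0}\}$, so that the index-shift convention $\mathrm{L}^*(J)K(\{\mathbf{0}\})=0$ does not interfere, then (\ref{addecJ}) gives $\mathrm{L}^*(J)K(Z)=K(Y)$, and the eigenvalue equation becomes $K(Y)=\lambda K(bY)$, equivalently $K(bY)=\lambda^{-1}K(Y)$, which is meaningful because $\lambda\neq 0$. Evaluating the equation at the exceptional set $Z=\{\mathbf{0}\}$ and using the convention forces $\lambda K(\{\mathbf{0}\})=0$, hence $K(\{\mathbf{0}\})=0$ as well.

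The heart of the argument is then an infinite-descent step. Suppose toward a contradiction that $K(X)\neq 0$ for some $X$. By the first paragraph $X$ must be of the form $bY_1$ with $Y_1=X/b$ an integer set, and $Y_1\neq\{\mathbf{0}\}$ since $K(\{\mathbf{0}\})=0$; moreover $K(Y_1)=\lambda K(X)\neq 0$. Applying the first paragraph again to $Y_1$ shows $Y_1=bY_2$, so $X=b^2Y_2$, and iterating yields $X=b^nY_n$ with $K(Y_n)\neq 0$ for every $n\geq 1$. Thus every site of $X$ has all coordinates divisible by $b^n$ for all $n$; since $X$ is a fixed finite set of integer points and $b\geq 2$, this is impossible unless every site of $X$ equals $\mathbf{0}$, i.e. $X=\{\mathbf{0}\}$. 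But $K(\{\mathbf{0}\})=0$, a contradiction. Hence $K\equiv 0$ and no nontrivial eigenvector exists.

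The step I expect to require the most care is the bookkeeping around the fixed set $\{\mathbf{0}\}=b\{\mathbf{0}\}$: because the scaling map $X\mapsto bX$ has $\{\mathbf{0}\}$ as its only genuine fixed set, the descent could in principle stall there, and it is precisely the index-shift convention from the preceding proposition that rules this out. Everything else is purely algebraic and uses neither the value of $r$ nor the finiteness of any norm, so the same reasoning in fact shows there is no nonzero $K$ whatsoever satisfying the equation, \emph{a fortiori} none in $\mathcal{B}_r^*$.
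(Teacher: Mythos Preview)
Your proof is correct and takes essentially the same approach as the paper: both exploit that, for any finite $X\neq\{\mathbf{0}\}$, the descent $X\mapsto X/b$ must terminate after finitely many steps, so $(\mathrm{L}^*(J))^{n+1}K(X)=\lambda^{n+1}K(X)$ is eventually forced to vanish. The paper phrases this as a one-line forward iteration, while you write out the contrapositive descent and handle the fixed set $\{\mathbf{0}\}$ explicitly via the index-shift convention, but the underlying idea is identical.
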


\begin{proof}
Fix an arbitrary finite subset $X$ of the infinite lattice, after
a finite number of iterations of $\mathrm{L}^*(J)$ (say $n$
times), $X$ will not be of the form $bY$ for some $Y$. Thus
$\lambda^{n+1} K(X)=(\mathrm{L}^*(J))^{n+1}K(X)=0$, which implies
$K(X)=0$.
\end{proof}

\begin{proposition}
For $\lambda=0$, there is no nontrivial eigenvector.
\end{proposition}

\begin{proof}
Suppose the nontrivial eigenvector $K(X)=m \neq 0$ for some finite
subset $X$, then the crucial fact that we can always find $Y$,
with $\mathrm{L}^*(J)K(Y)=K(X)$ will do the job. As
$\mathrm{L}^*(J)K(Y)=\lambda K(Y)=0$, we reach a contradiction.
\end{proof}

\begin{corollary}
In the Banach Space $\mathcal{B}_r^*$, the point spectrum of
$\mathrm{L}^*(J)$ is empty.
\end{corollary}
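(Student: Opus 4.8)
The plan is to read this off directly from the two immediately preceding propositions, so the work reduces to verifying that they cover every possible scalar. Recall from the Remark that a number $\lambda$ belongs to the point spectrum of $\mathrm{L}^*(J)$ precisely when $\mathrm{Kernel}(\mathrm{L}^*(J)-\lambda \I)$ is nontrivial, i.e.\ when there exists a nonzero $K \in \mathcal{B}_r^*$ with $\mathrm{L}^*(J)K = \lambda K$. Thus to show the point spectrum is empty I must rule out the existence of such a nontrivial eigenvector for \emph{every} choice of $\lambda$.

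First I would split the complex plane into the two mutually exclusive and exhaustive cases $\lambda \neq 0$ and $\lambda = 0$. The preceding proposition handling $\lambda \neq 0$ shows that iterating $\mathrm{L}^*(J)$ eventually maps any fixed finite subset $X$ to $0$ (since after finitely many applications $X$ ceases to be of the form $bY$), forcing $K(X)=0$ for all $X$ and hence $K \equiv 0$. The proposition handling $\lambda = 0$ rules out a nontrivial kernel vector by the surjectivity-style argument: for any $X$ with $K(X)=m\neq 0$ one can produce $Y$ with $\mathrm{L}^*(J)K(Y)=K(X)$, contradicting $\mathrm{L}^*(J)K = 0$. Together these two statements assert that no value of $\lambda$ admits a nontrivial eigenvector.

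The conclusion then follows formally: since every scalar $\lambda$ falls into exactly one of the two cases just addressed, and in each case $\mathrm{Kernel}(\mathrm{L}^*(J)-\lambda \I)$ is trivial, no $\lambda$ satisfies the defining condition for membership in the point spectrum. Therefore the point spectrum of $\mathrm{L}^*(J)$ in $\mathcal{B}_r^*$ is empty. I do not anticipate any genuine obstacle here: the substantive content lives entirely in the two propositions, and the only thing to confirm is the exhaustiveness of the case split, which is immediate. This corollary is precisely the input needed to complete Theorem~\ref{addec}, since having ruled out point spectrum it remains (elsewhere) only to check that the closure of the range of $\mathrm{L}^*(J)-\lambda \I$ fails to be all of $\mathcal{B}_r^*$ for $|\lambda|\leq 1$, placing the entire disk in the residual spectrum.
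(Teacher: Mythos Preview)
Your argument is correct and matches the paper's approach exactly: the corollary is stated without proof because it follows immediately from the two preceding propositions covering the cases $\lambda\neq 0$ and $\lambda=0$, which together exhaust all scalars. Your additional remark connecting this to the remainder of the proof of Theorem~\ref{addec} is also accurate.
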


\noindent \textit{Proof of Theorem \ref{addec} continued.} The
only thing left to show now is that
$\overline{\mathrm{Range}(\lambda I-\mathrm{L}^*(J))}\neq
\mathcal{B}_r^*$ for $|\lambda|\leq 1$. Define
$K(\{(1,0,...,0)\})=1$, and $K(X)=0$ for all other subsets $X$. We
will show that $K$ can not be approximated by any $K'$ in
$\mathrm{Range}(\lambda I-\mathrm{L}^*(J))$ within distance $1/2$.
To see this, note that for $n\geq 0$,
\begin{eqnarray}
K'(\{(b^{n+1},0,...,0)\})=\lambda
S(\{(b^{n+1},0,...,0)\})-S(\{(b^n,0,...,0)\})
\end{eqnarray}
for some $S$ that lies in $\mathcal{B}_r^*$. Suppose
\begin{eqnarray}
\frac{1}{2}&\geq&||K-K'||_r^*=\sum_{\x \in \L}\sup_{\x \in X}\frac{1}{|X|}|K(X)-K'(X)|e^{-rl(\x, X)}\notag \\
&\geq&\sum_{\x=(b^n,0,...,0)}\sup_{\x \in
X}\frac{1}{|X|}|K(X)-K'(X)|e^{-rl(\x, X)}\notag \\ &\geq&
\sum_{n=0}^{\infty}|K(\{(b^n,0,...,0)\})-K'(\{(b^n,0,...,0)\})|
\end{eqnarray}
\begin{equation}
=|\lambda S(\{(1,0,...,0)\})-1|+|\lambda
S(\{(b,0,...,0)\})-S(\{(1,0,...,0)\})|+\cdots
\end{equation}
Then, as $|\lambda|\leq 1$, for any $n\geq 0$,
\begin{multline}
\frac{1}{2}\geq|\lambda^{n+1} S(\{(b^n,0,...,0)\})-\lambda^n
S(\{(b^{n-1},0,...,0)\})|+\cdots\\+|\lambda^2
S(\{(b,0,...,0)\})-\lambda S(\{(1,0,...,0)\})|+|\lambda
S(\{(1,0,...,0)\})-1|.
\end{multline}
By the triangle inequality, this implies
\begin{equation}
|\lambda^{n+1} S(\{(b^n,0,...,0)\})-1|\leq \frac{1}{2},
\end{equation}
which further implies
\begin{equation}
|\lambda^{n+1}S(\{(b^n,0,...,0)\})|\geq \frac{1}{2}.
\end{equation}
Using $|\lambda|\leq 1$ again, we have
\begin{equation}
|S(\{(b^n,0,...,0)\})|\geq \frac{1}{2}.
\end{equation}
But then,
\begin{equation}
||S||_r^*=\sum_{\x \in \L}\sup_{\x \in
X}\frac{1}{|X|}|S(X)|e^{-rl(\x, X)}\geq
\sum_{n=0}^{\infty}|S(\{(b^n,0,...,0)\})|=\infty.
\end{equation}

\begin{remark}
Notice the similarity between the adjoint operators
$\mathrm{L}(J)$/$\mathrm{L}^*(J)$ in our Banach spaces and
left/right translation in $l^{\infty}$/$l^1$. $\mathrm{L}(J)$ acts
like left translation and $\mathrm{L}^*(J)$ acts like right
translation on sequences $(X,b X,...)$ for all possible subsets
$X$. Moreover, ignoring multiplicity of the eigenvalues, the
spectrum of $\mathrm{L}(J)$ is the same as that of left
translation in $l^{\infty}$, and the spectrum of $\mathrm{L}^*(J)$
is the same as that of right translation in $l^1$. This might be
related to the fact that the norms in our Banach spaces are
something like combinations of $l^{\infty}$ and $l^1$ norms.
\end{remark}

\section{Spectrum of the linearization of majority rule transformation and its adjoint at infinite temperature}
\label{majority} For notational convenience, in this section, we
set $s=b^d$ and $\nu=\tbinom{s-1}{\frac{s-1}{2}}/2^{s-1}$.

\begin{proposition}
Consider majority rule transformation with odd blocking factor $b$
and a probability kernel defined by
\begin{eqnarray}
\phi_{\y}(\sigma)=\sign\left(\sum_{\x\in \y^o}\sigma_{\x}\right).
\end{eqnarray}
Suppose the original interaction $J$ is at infinite temperature.
Then for every subset $Z$ of the image lattice, the linearization
$\mathrm{L}(J)$ of this transformation is given by the expression
\begin{equation}
\mathrm{L}(J) K(Z)=\sum_{W: W \subset Z^o}\prod_{\z \in Z}\chi(W
\cap \z^o)K(W),
\end{equation}
where $\chi(W\cap
\z^o)=\sum_{\sigma}\sum_{\sigma'}T_{\z}(\sigma,\sigma'_{\z})\sigma_{W\cap
\z^o}\sigma'_\z$.
\end{proposition}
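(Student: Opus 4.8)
The plan is to evaluate the partial-derivative kernel $\partial J'(Z)/\partial J(W)$ from (\ref{partial}) in closed form and then read off $\mathrm{L}(J)$ from its definition. Substituting (\ref{partial}) into the definition of $\mathrm{L}(J)$ gives
\begin{equation*}
\mathrm{L}(J)K(Z)=\sum_W\left(\sum_{\sigma}\sum_{\sigma'}\prod_{\y\in\L'}T_{\y}(\sigma,\sigma'_{\y})\,\sigma_W\sigma'_Z\right)K(W),
\end{equation*}
so it suffices to show that the inner double sum equals $\prod_{\z\in Z}\chi(W\cap\z^o)$ when $W\subset Z^o$ and vanishes otherwise.

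The key is a block decomposition. Since the blocks $\{\y^o\}_{\y\in\L'}$ partition $\L$, I would write $\sigma_W=\prod_{\y\in\L'}\sigma_{W\cap\y^o}$ (empty intersections contributing the factor $1$) and $\sigma'_Z=\prod_{\z\in Z}\sigma'_{\z}$. Because each kernel $T_{\y}(\sigma,\sigma'_{\y})$ depends on $\sigma$ only through $\y^o$ and on $\sigma'$ only through $\sigma'_{\y}$, the entire integrand factorizes as a product over $\y\in\L'$, with the $\y$-th factor involving only the spins in $\y^o$ and the single block spin $\sigma'_{\y}$. Since $\sum_{\sigma}$ and $\sum_{\sigma'}$ are product probability measures, the double sum then factorizes into a product of independent single-block averages.

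I would next evaluate each single-block factor. For a block $\y\notin Z$ the factor is the normalized average of $T_{\y}(\sigma,\sigma'_{\y})\,\sigma_{W\cap\y^o}$; performing the $\sigma'_{\y}$-sum first and using the normalization (\ref{norm}) collapses it to $\sum_{\sigma}\sigma_{W\cap\y^o}$, which by orthogonality of the characters on $\{+1,-1\}^{\y^o}$ equals $\delta(W\cap\y^o,\emptyset)$. For a block $\y\in Z$ the block spin $\sigma'_{\y}$ survives in $\sigma'_Z$, so the factor is exactly $\sum_{\sigma}\sum_{\sigma'}T_{\y}(\sigma,\sigma'_{\y})\,\sigma_{W\cap\y^o}\sigma'_{\y}=\chi(W\cap\y^o)$ by its definition. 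Multiplying the factors together, the product of Kronecker deltas forces $W\cap\y^o=\emptyset$ for every $\y\notin Z$, i.e.\ $W\subset Z^o$; on that event the surviving contribution is $\prod_{\z\in Z}\chi(W\cap\z^o)$, and otherwise the kernel vanishes. Feeding this back into the displayed expression for $\mathrm{L}(J)K(Z)$ yields the claim.

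The step needing the most care is the factorization over the infinitely many blocks of $\L'$. This is legitimate precisely because every block $\y$ with $\y\notin Z$ and $W\cap\y^o=\emptyset$ contributes the normalized average $\sum_{\sigma}1=1$, so that all but the finitely many blocks which either lie in $Z$ or meet $W$ contribute the trivial factor $1$; the infinite product is therefore well-defined and reduces to a finite one. The only genuine computation is the single-block orthogonality identity $\sum_{\sigma}\sigma_{W\cap\y^o}=\delta(W\cap\y^o,\emptyset)$ together with (\ref{norm}), and notably no property of the explicit majority-rule kernel is needed beyond the facts that it satisfies the normalization (\ref{norm}) and depends on $\sigma$ only through $\y^o$.
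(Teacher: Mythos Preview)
Your proof is correct and follows essentially the same approach as the paper: both evaluate (\ref{partial}) by factorizing the product kernel over blocks, use the normalization (\ref{norm}) on blocks $\y\notin Z$ to reduce those factors to $\sum_\sigma\sigma_{W\cap\y^o}$, and then invoke orthogonality to force $W\subset Z^o$, leaving the product of $\chi$-factors over $\z\in Z$. Your write-up is more explicit about the block-by-block factorization and the well-definedness of the infinite product, whereas the paper compresses these steps into a single displayed formula.
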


\begin{proof}
We evaluate (\ref{partial}) explicitly:
\begin{equation}
\frac{\partial J'(Z)}{\partial J(W)}=\sum_{\sigma}\sigma_{W
\texttt{\char92} Z^o}\prod_{\z\notin
Z}\sum_{\sigma'}T_{\z}(\sigma,\sigma'_{\z})\prod_{\z\in
Z}\sum_{\sigma}\sum_{\sigma'}T_{\z}(\sigma,\sigma'_{\z})\sigma_{W\cap
\z^o}\sigma'_{\z}.
\end{equation}
Since $\sum_{\sigma}\sigma_{W \texttt{\char92} Z^o}=0$ for $W$ not
completely contained inside $Z^o$, it follows that $W \subset
Z^o$.
\end{proof}

\begin{proposition}
Consider majority rule transformation with odd blocking factor $b$
and a probability kernel defined by
\begin{eqnarray}
\phi_{\y}(\sigma)=\sign\left(\sum_{\x\in \y^o}\sigma_{\x}\right).
\end{eqnarray}
Suppose the original interaction $J$ is at infinite temperature.
Then for every subset $Z$ of the original lattice, the adjoint of
the linearization $\mathrm{L}^*(J)$ of this transformation is
given by the expression
\begin{eqnarray}
\label{admajJ}
\mathrm{L}^*(J)K(Z)=\prod_{W_{\n}}\chi(W_{\n})K(\cup \{\n\}),
\end{eqnarray}
where $Z=\cup W_{\n}$ and $W_{\n}\subset \n^o$.
\end{proposition}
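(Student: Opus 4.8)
The plan is to begin from the defining correspondence (\ref{corresp}) for the adjoint, insert the explicit forward operator
\begin{equation*}
\mathrm{L}(J)K_2(X)=\sum_{W\subset X^o}\prod_{\z\in X}\chi(W\cap\z^o)K_2(W),
\end{equation*}
and then interchange the two summations so that $K_2$ can be factored out. Writing $X$ for a finite subset of the image lattice and $W$ for a finite subset of the original lattice, the left-hand side of (\ref{corresp}) becomes $\sum_W K_2(W)\sum_{X:\,W\subset X^o}K_1(X)\prod_{\z\in X}\chi(W\cap\z^o)$, and comparison with the right-hand side forces
\begin{equation*}
\mathrm{L}^*(J)K_1(W)=\sum_{X:\,W\subset X^o}K_1(X)\prod_{\z\in X}\chi(W\cap\z^o).
\end{equation*}

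Next I would carry out the block bookkeeping that turns this \emph{a priori} infinite sum into a single term. Each point of $W$ lies in exactly one block, so the condition $W\subset X^o$ is equivalent to requiring that $X$ contain every image site $\n$ whose block $\n^o$ meets $W$; denote this finite collection of sites by $N(W)$, so the constraint reads $N(W)\subseteq X$. The decisive point is that $\chi(\emptyset)=0$: indeed, using the normalization (\ref{begin}),
\begin{equation*}
\chi(\emptyset)=\sum_{\sigma}\sum_{\sigma'}T_{\z}(\sigma,\sigma'_{\z})\sigma'_{\z}=\tfrac{1}{2}\Big(\sum_{\sigma}T_{\z}(\sigma,+1)-\sum_{\sigma}T_{\z}(\sigma,-1)\Big)=0.
\end{equation*}
Consequently, whenever $X$ strictly contains $N(W)$ there is a site $\z\in X$ with $W\cap\z^o=\emptyset$, so the factor $\chi(\emptyset)$ annihilates the whole product; only the term $X=N(W)$ survives.

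The outcome is $\mathrm{L}^*(J)K_1(W)=K_1(N(W))\prod_{\n\in N(W)}\chi(W\cap\n^o)$, and relabeling $W$ as $Z$, setting $W_{\n}=Z\cap\n^o$ (so that $Z=\cup_{\n}W_{\n}$ with $W_{\n}\subset\n^o$), and identifying $N(Z)=\cup\{\n\}$ with the union of those sites carrying a nonempty piece, yields exactly (\ref{admajJ}). I expect the only genuine obstacle to be this collapse step: one must verify $\chi(\emptyset)=0$ and argue carefully that it forces the superset sum to reduce to the single term $X=N(W)$, since this is precisely what distinguishes the clean product form of the majority-rule adjoint from the unwieldy sum one might naively anticipate. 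The interchange of summation and the final relabeling are routine once this is in hand.
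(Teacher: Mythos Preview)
Your proposal is correct and follows essentially the same route as the paper: plug the forward operator into the adjoint correspondence (\ref{corresp}), interchange the sums over $X$ and $W$, and read off $\mathrm{L}^*(J)$. The paper's proof is extremely terse---it jumps directly from the double sum to the single-term expression without comment---so your explicit verification that $\chi(\emptyset)=0$ (via (\ref{begin})) and the ensuing collapse of the superset sum to $X=N(W)$ supplies exactly the step the paper leaves implicit.
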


\begin{proof}
We notice that in this case, (\ref{corresp}) becomes
\begin{eqnarray}
\sum_X K_1(X)\mathrm{L}(J)K_2(X)&=&\sum_X K_1(X)\sum_{Y: Y \subset
X^o}\prod_{\x \in X}\chi(Y \cap \x^o)K_2(Y)\notag
\\&=&\sum_{Y=\cup W_{\n}}K_2(Y)\prod_{W_{\n}}\chi(W_{\n})K_1(\cup \{\n\}).
\end{eqnarray}
\end{proof}

\begin{theorem}
\label{maj} Suppose the original interaction $J$ is at infinite
temperature. Then in the Banach Space $\mathcal{B}_r$, the
spectrum of the linearization of the majority rule transformation
$\mathrm{L}(J)$ is all point spectrum, $|\lambda|\leq s\nu$.
\end{theorem}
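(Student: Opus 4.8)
The plan is to mirror the structure already used for decimation in Theorem \ref{dec}: first pin down the operator norm (equivalently the spectral radius) in order to forbid spectrum outside the disk $|\lambda|\le s\nu$, and then exhibit an eigenvector for \emph{every} $\lambda$ in the closed disk, so that the whole disk turns out to be point spectrum.

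For the norm I would first record that $\chi(V)$ depends on $V\subset\z^o$ only through its cardinality, since the infinite-temperature measure on a block is exchangeable; write $\chi(V)=\chi_{|V|}$. Because $\sign$ is odd and the measure is symmetric, $\chi_j=0$ for even $j$, while a short computation identifies $\chi_1=\nu$. The arithmetic heart of the upper bound is the inequality $|\chi_j|\le\nu$ for all odd $j$ (with equality at $j=1$ and $j=s$). Granting this, fix an image site $\x'$ and reorganize $\sum_{Z\ni\x'}|\mathrm{L}(J)K(Z)|e^{rl(\x',Z)}$ as a sum over original sets $W$: for a fixed $W$ the only image set that can contribute is $\mathcal{Z}(W)$, the collection of blocks that $W$ meets, and each such block must meet $W$ in an odd number of sites. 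Bounding the $\chi$-factor of every block other than $\x'$ by $1$, the factor of the block $\x'$ by $\nu$, and using that passing to blocks does not increase distances (so $l(\x',\mathcal{Z}(W))\le l(\x,W)$ for a chosen site $\x\in W\cap\x'^o$), one charges each $W$ to such a site and sums to get $\sum_{Z\ni\x'}|\mathrm{L}(J)K(Z)|e^{rl(\x',Z)}\le\nu\sum_{\x\in\x'^o}\sum_{W\ni\x}|K(W)|e^{rl(\x,W)}\le s\nu\,\|K\|_r$. Hence $\|\mathrm{L}(J)\|\le s\nu$; since the constant pure magnetic field $K(\{\x\})\equiv m$ is an eigenvector with eigenvalue $s\nu$ (each of the $s$ sites of a block contributes $\chi_1=\nu$, while image sets of cardinality $\ge 2$ carry a factor $\chi_0=0$), this bound is attained and every eigenvalue obeys $|\lambda|\le s\nu$.

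For the eigenvalues I would restrict attention to the invariant subspace of magnetic-field interactions $K(\{\x\})=f(\x)$ (vanishing on larger sets), on which $\mathrm{L}(J)$ acts as the inflation operator $(Af)(\z)=\nu\sum_{\x\in\z^o}f(\x)$, because only singletons survive the product and a singleton lies in exactly one block. The problem reduces to finding, for each $\lambda=\nu\lambda'$ with $|\lambda'|\le s$, a bounded solution of $\sum_{\x\in\z^o}f(\x)=\lambda' f(\z)$. Several regimes are easy: the constant field gives $\lambda'=s$; a self-similar ansatz $f(\x)=\prod_i h(d_i(\x))$ indexed by the balanced base-$b$ digits, with $h(0)=1$ and $|h(k)|\le 1$, gives a bounded eigenfunction with $\lambda'=\sum_k h(k)$ and so fills the subdisk centered at $1$ of radius $s-1$; and a decimation-type ray supported on $\{b^n\}$ reaches $|\lambda'|\le 1$. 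I would then combine and, where needed, twist these by a phase advancing with scale, using a limiting argument to absorb the single exceptional equation at the center of the configuration, so as to cover all $|\lambda|\le s\nu$. Together with the norm bound this shows the spectrum is exactly the closed disk of radius $s\nu$ and is entirely point spectrum.

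The main obstacle is precisely this eigenvector construction. Unlike decimation, a single geometric ray both fails to reach beyond $|\lambda|=\nu$ and \emph{leaks} at its base block: a non-central site of that block carries nonzero weight, so the image equation there is violated. One must therefore genuinely spread the eigenvector across each block to recover the factor $s$ while propagating a factor $\lambda/(s\nu)$ across scales, and simultaneously cancel the central leakage—very likely by enlarging the support to non-singleton interactions, since the shell-type and digit-product functions become inconsistent exactly at the origin. A secondary but nontrivial technical point is the combinatorial inequality $|\chi_j|\le\nu$ on which the norm estimate rests.
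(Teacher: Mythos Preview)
Your treatment of the operator norm is essentially the paper's: you isolate the key combinatorial inequality $|\chi(W\cap\z^o)|\le\nu$ (the paper proves this as a separate ``polygon'' proposition), charge each $W$ to a site of $W\cap\x'^o$, bound the coefficient by $\nu$, and use that the constant magnetic field saturates the bound. That part is fine.

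The genuine gap is the eigenvector construction, and you in effect concede it. Your digit--product ansatz yields only the disk $|\lambda/\nu-1|\le s-1$, which misses, e.g., $\lambda=-s\nu$; your decimation-type ray reaches only $|\lambda|\le\nu$ and, as you note, leaks at the origin block. You then say you ``would combine and, where needed, twist'' these, but no actual construction is given, and your diagnosis that one ``very likely'' needs non-singleton support is incorrect. The paper's eigenvector is entirely supported on singletons and is much simpler than you anticipate: on the origin block set all sites to $1$ except one corner site, which carries the value $\lambda/\nu-(s-1)$ (so the block sum equals $\lambda/\nu$ while the center site keeps value $1$); on every other block $\n^o$ set every site equal to $\dfrac{\lambda}{s\nu}\,K(\{\n\})$. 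The recursion terminates (balanced base-$b$ digits eventually bring every site into $\mathbf{0}^o$), the eigen-equation $\nu\sum_{\m\in\z^o}K(\{\m\})=\lambda K(\{\z\})$ holds at $\z=\mathbf{0}$ by the corner adjustment and at $\z\neq\mathbf{0}$ by the equal spreading, and $|\lambda|\le s\nu$ keeps $\|K\|_r<\infty$. For $|Z|\ge2$ the product contains a factor $\chi(\emptyset)=0$, so both sides vanish. Thus no ``twisting'' or enlargement of support is needed; the leakage you worried about is absorbed by modifying a single site of the origin block rather than by leaving the magnetic-field subspace.
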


\begin{proof}
The proof of this theorem follows from several propositions.
\end{proof}

\begin{proposition}
Consider Ising-type spin system on an odd polygon $A$ with
cardinality $|A|$. Fix a certain vertex $V$ and a certain subset
$W$ of the vertices. If $\sigma'_a\in \{+1,-1\}$ satisfies
$\sigma_A\sigma'_a>0$, then
\begin{equation}
\label{polygon}
|\sum_{\sigma}\sigma_W\sigma'_a| \leq
\sum_{\sigma}\sigma_{V}\sigma'_a=\tbinom
  {|A|-1}{\frac{|A|-1}{2}}/2^{|A|-1},
\end{equation}
where $\tbinom nk$ is the binomial coefficient.
\end{proposition}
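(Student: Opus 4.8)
My plan is to read the constraint $\sigma_A\sigma'_a>0$ as the statement that the majority spin $\sign\big(\sum_{\x\in A}\sigma_{\x}\big)$ equals the fixed value $\sigma'_a$ (this is well defined precisely because $|A|$ is odd, so there is no tie), so that each side of (\ref{polygon}) is a sum over the configurations $\sigma$ whose majority is $\sigma'_a$; up to the common normalizing factor this is exactly the quantity $\chi$ introduced for $\mathrm{L}(J)$. I take $W\neq\emptyset$, as interactions are defined only on nonempty sets. First I would normalize using the spin-flip symmetry (\ref{sym}), i.e. $\sigma\mapsto-\sigma$, under which the majority flips while $\sigma_W\mapsto(-1)^{|W|}\sigma_W$: this reduces the case $\sigma'_a=-1$ to $\sigma'_a=+1$, and, combined with $\sum_{\text{all }\sigma}\sigma_W=0$, shows that the constrained sum vanishes whenever $|W|$ is even. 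Hence I may assume $\sigma'_a=+1$ and $|W|=w$ odd. Since the majority is invariant under permutations of $A$, the constrained sum $\sum_{\sigma}\sigma_W$ depends on $W$ only through $w$; write it $g_w$ (unnormalized), so the right-hand side of (\ref{polygon}) is $g_1$ divided by the number $2^{|A|-1}$ of configurations with a fixed majority.

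Next, the equality. For a single site $W=\{V\}$ I would split according to $\sigma_V=\pm1$ and count the configurations of the remaining $|A|-1$ spins that keep the overall majority positive. Each count is a tail sum of binomial coefficients, and their difference telescopes to the central coefficient, giving $g_1=\binom{|A|-1}{(|A|-1)/2}$; after the normalization by $2^{|A|-1}$ this is the asserted value $\binom{|A|-1}{(|A|-1)/2}/2^{|A|-1}$, and by permutation symmetry it is independent of which vertex $V$ is chosen.

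The inequality $|g_w|\le g_1$ is the heart of the matter. Conditioning on the number $j$ of down-spins inside $W$ and $\ell$ outside $W$, and imposing $j+\ell\le(|A|-1)/2$ so the majority stays positive, gives
\begin{equation*}
g_w=\sum_{j=0}^{w}(-1)^{j}\binom{w}{j}\sum_{\ell=0}^{(|A|-1)/2-j}\binom{|A|-w}{\ell}.
\end{equation*}
I would evaluate this alternating double sum in closed form: summing over $\ell$ first yields tail binomial sums, and the outer alternating binomial sum is a $w$-th finite difference that collapses (via Vandermonde's identity) to
\begin{equation*}
|g_w|=\frac{\binom{(|A|-1)/2}{(w-1)/2}}{\binom{|A|-1}{w-1}}\,\binom{|A|-1}{(|A|-1)/2}.
\end{equation*}
The claim then follows from the elementary inequality $\binom{(|A|-1)/2}{(w-1)/2}\le\binom{|A|-1}{w-1}$: writing $m=(|A|-1)/2$ and $k=(w-1)/2$, Vandermonde gives $\binom{2m}{2k}\ge\binom{m}{k}^2\ge\binom{m}{k}$, with equality exactly when $w=1$ or $w=|A|$.

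I expect the finite-difference evaluation of $g_w$ to be the main obstacle: collapsing the alternating double sum cleanly and tracking the residual sign $(-1)^{(w-1)/2}$ take the most care. One might instead try to fix $V\in W$ and deduce $|g_w|\le g_1$ from the two identities $g_1-g_w=2\sum_{\sigma}\sigma_V$ over $\{\text{maj}=+,\ \sigma_{W\setminus\{V\}}=-1\}$ and $g_1+g_w=2\sum_{\sigma}\sigma_V$ over $\{\text{maj}=+,\ \sigma_{W\setminus\{V\}}=+1\}$, showing both are nonnegative; but conditioning on a parity of spins is not a monotone event, and this nonnegativity is in fact equivalent to the claim rather than easier, so I would rely on the closed-form route above.
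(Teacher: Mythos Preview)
Your argument is correct, and the first two steps (spin-flip symmetry to dispose of even $|W|$; the tail-sum count giving $g_1=\binom{|A|-1}{(|A|-1)/2}$) coincide with the paper's. For the inequality $|g_w|\le g_1$, however, your route is genuinely different. The paper does \emph{not} evaluate $g_w$ in closed form; instead it fixes $V\in W$ and uses an involution: flip every spin except $\sigma_V$. This involution preserves the constraint $\sigma_{W\setminus\{V\}}=-1$ (since $|W\setminus\{V\}|$ is even) and, whenever the $|A|-1$ spins on $A\setminus\{V\}$ are not evenly split, reverses the majority $\sigma'_a$ while fixing $\sigma_V$ and $\sigma_W$; on such pairs $(\sigma_V-\sigma_W)\sigma'_a$ cancels. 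The only surviving contributions to $\sum_\sigma(\sigma_V-\sigma_W)\sigma'_a$ come from balanced configurations on $A\setminus\{V\}$ with $\sigma_{W\setminus\{V\}}=-1$; on each of these $\sigma'_a=\sigma_V$, so the contribution is $+2$. Since the number of such configurations lies between $0$ and $2\binom{|A|-1}{(|A|-1)/2}$, one reads off $-g_1\le g_w\le g_1$ directly. This is precisely the alternative you sketch and then dismiss in your final paragraph: the nonnegativity of $g_1\pm g_w$ is not circular---it is exactly what the involution delivers. Your explicit evaluation buys more (the exact value $|g_w|=\binom{m}{k}\binom{2m}{m}/\binom{2m}{2k}$ and the sharp equality cases $w\in\{1,|A|\}$) at the cost of the finite-difference/Vandermonde manipulation you flag as the main obstacle; the paper's involution is quicker but yields only the bound.
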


\begin{proof}
We first show that $\sum_{\sigma}\sigma_W\sigma'_a=0$ for any $W$
with even cardinality. This is due to a symmetry argument. If
there is a spin configuration with $\sigma_W\sigma'_a=1$, then
flipping the spins at every vertex, we will have a
configuration with
$\sigma_W\sigma'_a=(-1)^{|W|}(-1)=(-1)^{|W|+1}=-1$. Vice versa. Thus the total sum
will be zero.

Next we investigate into the special case
$\sum_{\sigma}\sigma_{V}\sigma'_a$ where $V$ is any fixed vertex.
The explicit calculation is easy to carry out. Due to symmetry, we
only consider $\sigma_V=1$ in the following, and there are $|A|-1$
vertices for which the spins are yet to be assigned.
\begin{enumerate}
    \item $\sigma'_a=1$, if there are more $1$'s than $-1$'s in the
overall spin configuration, i.e., as long as the number of $-1$'s
does not exceed $\frac{|A|-1}{2}$. It is not hard to see that
there are
$\tbinom{|A|-1}{0}+\tbinom{|A|-1}{1}+\cdots+\tbinom{|A|-1}{\frac{|A|-1}{2}}$
of them.
    \item $\sigma'_a=-1$, if there are more $-1$'s than $1$'s in the overall
spin configuration, i.e., as long as the number of $-1$'s exceeds
$\frac{|A|-1}{2}$. Again, it is not hard to see that there are
$\tbinom{|A|-1}{\frac{|A|+1}{2}}+
\tbinom{|A|-1}{\frac{|A|+3}{2}}+\cdots+\tbinom{|A|-1}{|A|-1}=
\tbinom{|A|-1}{\frac{|A|-3}{2}}+\tbinom{|A|-1}{\frac{|A|-5}{2}}+\cdots+\tbinom{|A|-1}{0}$
of them.
\end{enumerate}
In conclusion, when $\sigma_V=1$, there are
$\tbinom{|A|-1}{\frac{|A|-1}{2}}$ more spin configurations for
$\sigma'_a$ to be $1$ rather than to be $-1$. Similar result holds
for $\sigma_V=-1$. Thus considering all possible spin
configurations, there are $2\tbinom{|A|-1}{\frac{|A|-1}{2}}$ more
spin configurations for $\sigma_V\sigma'_a$ to be $1$ rather than
to be $-1$. It follows that
$\sum_{\sigma}\sigma_{V}\sigma'_a=\tbinom{|A|-1}{\frac{|A|-1}{2}}/2^{|A|-1}$.

Finally we consider $\sum_{\sigma}\sigma_W\sigma'_a$ for any $W$
with odd cardinality. Without loss of generality, suppose
$V\subset W$. For a fixed spin configuration,
$\sigma_V\sigma'_a\neq \sigma_W\sigma'_a$ can only occur when
there is an odd number of $-1$'s and an odd number of $1$'s in the
spin configuration for vertices in $W \texttt{\char92} V$. For
such a configuration, we notice the following important fact:
Suppose it has the extra property that unequal numbers of $-1$'s
and $1$'s are assigned for the remaining $|A|-1$ vertices of $A
\texttt{\char92} V$, then if we flip the spins at every vertex
other than $V$, $\sigma_V\sigma'_a$ will change sign. Moreover, at
the same time, the sign of $\sigma_W\sigma'_a$ also changes, so
the total sum does not change. Therefore, we see that the
difference in $\sum_{\sigma}\sigma_W\sigma'_a$ and
$\sum_{\sigma}\sigma_V\sigma'_a$ can only be caused by the
following scenario: Equal numbers of $-1$'s and $1$'s are assigned
for the remaining $|A|-1$ vertices of $A \texttt{\char92} V$, and
there is an odd number of $-1$'s and an odd number of $1$'s in the
spin configuration for vertices in $W \texttt{\char92} V$. It is
not hard to see that there are at most
$2\tbinom{|A|-1}{\frac{|A|-1}{2}}$ of them. Thus
$\sum_{\sigma}\sigma_W\sigma'_a$ varies between
$-\tbinom{|A|-1}{\frac{|A|-1}{2}}/2^{|A|-1}$ and
$\tbinom{|A|-1}{\frac{|A|-1}{2}}/2^{|A|-1}$, and our claim
follows.
\end{proof}

\begin{proposition}
$||\mathrm{L}(J)||=s\nu$.
\end{proposition}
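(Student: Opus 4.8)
The plan is to prove the two matching inequalities $\|\mathrm{L}(J)\|\ge s\nu$ and $\|\mathrm{L}(J)\|\le s\nu$ separately, following the template of the decimation case: the lower bound comes from an explicit eigenvector that attains the norm, while the upper bound is a weighted-summation estimate powered by the preceding polygon proposition.

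For the lower bound I would take $K$ to be the constant pure magnetic field, $K(\{\x\})=1$ for every one-point set and $K(X)=0$ otherwise, so that $\|K\|_r=1$. Evaluating $\mathrm{L}(J)K$ on a one-point image set $\{\z\}$, only single-site $W\subset\z^o$ survive, and each contributes $\chi(\{\mathbf{w}\})=\nu$ by the single-vertex case of the polygon proposition; since $\z^o$ contains $s$ sites, $\mathrm{L}(J)K(\{\z\})=s\nu$. On any $Z$ with $|Z|>1$ every single-site $W$ meets only one block, forcing a factor $\chi(\emptyset)=0$, so $\mathrm{L}(J)K(Z)=0$. Hence $K$ is an eigenvector with eigenvalue $s\nu$ and $\|\mathrm{L}(J)K\|_r=s\nu$, which gives $\|\mathrm{L}(J)\|\ge s\nu$.

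For the upper bound I would fix an image site $\x$ and estimate $\sum_{Z\ni\x}|\mathrm{L}(J)K(Z)|e^{rl(\x,Z)}$. Three facts from the polygon proposition drive the estimate: $\chi(W\cap\z^o)=0$ whenever $|W\cap\z^o|$ is even (in particular $\chi(\emptyset)=0$), and $|\chi(W\cap\z^o)|\le\nu$ otherwise. Consequently a term indexed by $W$ can contribute to $\mathrm{L}(J)K(Z)$ only when $W\cap\z^o$ has odd cardinality for every $\z\in Z$ and $W\subset Z^o$; since the blocks partition $\L$, this forces $Z$ to be exactly the set of blocks that $W$ meets. Writing this unique $Z$ as $X(W)$ and using $|\chi|\le\nu$ on each of its $|X(W)|\ge 1$ factors together with $\nu\le 1$, I get $|\mathrm{L}(J)K(Z)|\le\nu\sum_{W:\,X(W)=Z}|K(W)|$. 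Interchanging the order of summation, the constraint $\x\in Z=X(W)$ becomes $W\cap\x^o\ne\emptyset$, so
\[
\sum_{Z\ni\x}|\mathrm{L}(J)K(Z)|e^{rl(\x,Z)}\le \nu\sum_{W:\,W\cap\x^o\ne\emptyset} e^{rl(\x,X(W))}\,|K(W)|.
\]
To finish, for each such $W$ I would select a site $\mathbf{w}\in W\cap\x^o$ and compare weights: because $\mathbf{w}\in\x^o$ while, for the block $\z\in X(W)$ realizing $l(\x,X(W))$, some site $\y\in W$ lies in $\z^o$, the cube-blocking estimate yields $l(\x,X(W))\le l(\mathbf{w},W)$. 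Bounding $\sum_{W:\,W\cap\x^o\ne\emptyset}$ by $\sum_{\mathbf{w}\in\x^o}\sum_{W\ni\mathbf{w}}$ (an overcount, all terms nonnegative) then gives $\nu\sum_{\mathbf{w}\in\x^o}\|K\|_r=s\nu\,\|K\|_r$, and taking the supremum over $\x$ proves $\|\mathrm{L}(J)\|\le s\nu$.

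The main obstacle is the metric comparison $l(\x,X(W))\le l(\mathbf{w},W)$. In the decimation case this was the clean scaling identity $l(b\x,bX)=b\,l(\x,X)$, but here the image site $\x$ is represented by a whole cube $\x^o$ rather than the single point $b\x$, so I must verify that for cube blocking the distance between two image sites is dominated by the distance between any two sites chosen from the corresponding blocks. For the sup metric this is routine: if $|\x-\z|_\infty=k\ge 1$ then for $\mathbf{w}\in\x^o$ and $\y\in\z^o$ one has $|\mathbf{w}-\y|_\infty\ge bk-(b-1)\ge k$. Once this is in place, the factor $s$ emerges precisely from the $s$ sites of the block $\x^o$, and its tightness is confirmed by the constant-field eigenvector, so the two bounds meet at $s\nu$.
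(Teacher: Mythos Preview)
Your proposal is correct and follows essentially the same approach as the paper's proof: both use the constant pure magnetic field as an eigenvector to get the lower bound $s\nu$, and for the upper bound both bound each coefficient by $\nu$ via the polygon proposition, then group the surviving $W$'s according to which site of $\x^o$ they contain and use the metric comparison to bound each of the $s$ groups by $\|K\|_r$. Your write-up is more explicit than the paper's in two places---you spell out that $\chi(\emptyset)=0$ forces $Z=X(W)$ (making the outer sum over $Z\ni\x$ collapse to a sum over $W$ meeting $\x^o$), and you actually verify the inequality $l(\x,X(W))\le l(\mathbf{w},W)$ for the sup metric---whereas the paper handles both points in a single sentence (``each $Y$ has size no smaller than $X$, the exponential factor changes to a larger quantity'').
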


\begin{proof}
We check that for each fixed $\x \in \L$, $\sum_{X: \x \in
X}|\mathrm{L}(J) K(X)|e^{rl(\x,X)} \leq s\nu||K||_r$, which would
imply $||\mathrm{L}(J)||\leq s\nu$. As $\x \in X$, $\mathrm{L}(J)
K(X)$ is a linear combination of $K(Y)$'s, each one with
coefficient bounded above by $\nu$ by (\ref{polygon}). Ignoring
the coefficients of $K(Y)$'s, we can then collect terms according
to which one of the sites in $\x^o$ belongs to $Y$. (When $|Y\cap
\x^o|>1$, $K(Y)$ can be classified into either one of the $s$
groups.) Moreover, each $Y$ has size no smaller than $X$, the
exponential factor changes to a larger quantity after the action
of $\mathrm{L}(J)$. We see that each collection is bounded above
by $||K||_r$ by definition. The claim is verified when we realize
that a constant pure magnetic field is an eigenvector with
eigenvalue $s\nu$.
\end{proof}

\begin{corollary}
Every eigenvalue $|\lambda| \leq s\nu$.
\end{corollary}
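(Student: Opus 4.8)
The plan is to obtain this as an immediate consequence of the preceding proposition, which establishes $\|\mathrm{L}(J)\|=s\nu$, together with the elementary principle that every eigenvalue of a bounded linear operator is bounded in modulus by the operator norm. This mirrors exactly the corresponding corollary for the decimation transformation, where the bound $|\lambda|\leq 1$ followed from $\|\mathrm{L}(J)\|=1$, so I would present the argument in the same concise form.

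Concretely, I would reason as follows. Suppose $\lambda$ is an eigenvalue of $\mathrm{L}(J)$, so that there is a nonzero $K\in\mathcal{B}_r$ with $\mathrm{L}(J)K=\lambda K$. Taking $\mathcal{B}_r$ norms of both sides and using the homogeneity of the norm gives
\begin{equation*}
|\lambda|\,\|K\|_r=\|\mathrm{L}(J)K\|_r\leq \|\mathrm{L}(J)\|\,\|K\|_r=s\nu\,\|K\|_r.
\end{equation*}
Since $K\neq 0$ forces $\|K\|_r>0$, I may divide through by $\|K\|_r$ to conclude $|\lambda|\leq s\nu$.

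There is no genuine obstacle here: the entire quantitative content has already been discharged in the computation of the operator bound $\|\mathrm{L}(J)\|=s\nu$, which in turn rests on the polygon estimate (\ref{polygon}) controlling each coefficient $\chi$ by $\nu$. The corollary merely repackages that bound in spectral language, and the only thing left to verify is the routine norm inequality displayed above.
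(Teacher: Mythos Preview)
Your proposal is correct and matches the paper's intended approach: the corollary is stated without proof in the paper, immediately following the proposition $\|\mathrm{L}(J)\|=s\nu$, precisely because it is the standard fact that eigenvalues are bounded in modulus by the operator norm. Your explicit write-up of this routine step is exactly what the paper leaves implicit, paralleling the identical unproved corollary in the decimation case.
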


\begin{proposition}
Every $|\lambda| \leq s\nu$ is an eigenvalue.
\end{proposition}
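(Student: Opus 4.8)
The plan is to exhibit, for each $\lambda$ with $|\lambda|\le s\nu$, an explicit eigenvector $K$ supported entirely on one-point sets, mirroring the single-ray eigenvector used for decimation but spread across whole blocks. First I would record the action of $\mathrm{L}(J)$ on such $K$. Writing $K(\{\x\})=k(\x)$ and $K(X)=0$ for $|X|\neq 1$, and noting that $s=b^d$ is odd since $b$ is odd, the preceding polygon proposition (with $|A|=s$) gives $\chi(\{\x\})=\nu$ for every one-point set and $\chi(W)=0$ whenever $|W|$ is even, in particular $\chi(\emptyset)=0$. Hence for a one-point image set $\mathrm{L}(J)K(\{\z\})=\sum_{W\subset\z^o}\chi(W)K(W)=\nu\sum_{\x\in\z^o}k(\x)$, while for $|Z|>1$ every surviving term carries a factor $\chi(\emptyset)=0$, so $\mathrm{L}(J)K(Z)=0=\lambda K(Z)$. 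Thus the eigenvalue equation collapses to the single scalar recursion $\sum_{\x\in\z^o}k(\x)=\mu\,k(\z)$ for all $\z$, where $\mu=\lambda/\nu$ satisfies $|\mu|\le s$.

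Next I would solve this recursion with a bounded $k$. The blocks $\{\z^o\}$ partition the lattice, so the parent map sending $\x$ to the unique $\z$ with $\x\in\z^o$ organizes $\mathbb{Z}^d$ into a tree rooted at $\mathbf{0}$ (with a harmless self-loop, since $\mathbf{0}\in\mathbf{0}^o$ whereas $\z\notin\z^o$ for every $\z\neq\mathbf{0}$), in which every node $\z$ has exactly $s$ children, namely the sites of $\z^o$. I would set $k(\mathbf{0})=0$; then the root equation $\mu k(\mathbf{0})=\sum_{\x\in\mathbf{0}^o}k(\x)$ only requires the $s-1$ non-origin sites of $\mathbf{0}^o$ to sum to zero, which I arrange by the antipodal assignment $k(-\x)=-k(\x)$, choosing at least one such pair to equal $\pm 1$ so that $k\not\equiv 0$. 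On every node $\z\neq\mathbf{0}$ of value $v$ I would assign each of its $s$ children the common value $(\mu/s)v$; then $\sum_{\x\in\z^o}k(\x)=s\cdot(\mu/s)v=\mu v$, so the recursion holds at $\z$, and the same computation shows it holds at the level-one sites as well. Every $\z\in\mathbb{Z}^d$ is thereby covered.

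Finally I would verify membership in $\mathcal{B}_r$. Because one-point sets have $l(\x,\{\x\})=0$, the norm reduces to $\|K\|_r=\sup_{\x}|k(\x)|$ for every $r\ge 0$; and since a node at tree-distance $\ell$ from the level-one sites carries value of modulus $|\mu/s|^{\ell}\le 1$ (as $|\mu/s|=|\lambda|/(s\nu)\le 1$), we get $\|K\|_r=1<\infty$. This produces a nonzero eigenvector for every $|\lambda|\le s\nu$, and the freedom in choosing pairs and distributions shows, as in the decimation case, that there are in fact infinitely many. The main obstacle is precisely the factor $s$: the naive one-site ray used for decimation obeys $a_{n+1}=\mu\,a_n$ and stays bounded only for $|\mu|\le 1$, i.e.\ $|\lambda|\le\nu$, so it reaches only a fraction of the disk. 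Spreading the eigenvector equally over all $s$ sites of each block is what converts the per-step gain from $\mu$ to $\mu/s$, matching the operator norm $s\nu$ and filling the entire closed disk; making this distribution consistent at the root block, where the origin lies in its own block, is the one place that needs the antipodal cancellation.
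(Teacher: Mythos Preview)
Your proposal is correct and follows essentially the same approach as the paper: both constructions support the eigenvector on singletons only, reduce the eigenvalue equation to the block-sum recursion $\sum_{\x\in\z^o}k(\x)=(\lambda/\nu)\,k(\z)$, and propagate values from each node to its $s$ children by the common factor $\lambda/(s\nu)$. The only difference is cosmetic, at the origin block: the paper takes $k(\mathbf{0})=1$ and adjusts a single corner site to $\lambda/\nu-(s-1)$ so that the block sums to $\lambda/\nu$, while you take $k(\mathbf{0})=0$ and use antipodal cancellation so that the block sums to $0$.
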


\begin{proof}
For a generic $\lambda$, we display one eigenvector here. In fact,
with some further thought, it is not hard to show that there are
infinitely many eigenvectors for each $\lambda$. The eigenvector
$K$ is defined by
\begin{equation}
K(\{(\frac{b-1}{2},...,\frac{b-1}{2})\})=\lambda/\nu-(s-1),
\end{equation}
and
\begin{equation}
K(\{\x\})=1
\end{equation}
for $(\frac{b-1}{2},...,\frac{b-1}{2})\neq \x\in \mathbf{0}^o$. In
general, for $\n \neq \mathbf{0}$, $K$ is defined by
$sK(\{\m\})=\lambda/\nu K(\{\n\})$ for $\m\in \n^o$. For all the
other subsets $X$, $K(X)$ is set to zero.
\end{proof}

\begin{corollary}
The spectrum of $\mathrm{L}(J)$ diverges as
$\sqrt{\frac{2s}{\pi}}$ as the blocking factor $b$ gets large.
\end{corollary}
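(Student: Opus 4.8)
The plan is to read off from Theorem~\ref{maj} that the spectrum of $\mathrm{L}(J)$ fills the entire disk $|\lambda|\leq s\nu$, so that its size—the spectral radius—is the single number $s\nu$, where $\nu=\tbinom{s-1}{\frac{s-1}{2}}/2^{s-1}$ and $s=b^d$. The whole corollary thus reduces to the purely asymptotic statement that
\begin{equation*}
s\nu \sim \sqrt{\frac{2s}{\pi}} \qquad \text{as } b\to\infty,
\end{equation*}
and the right tool is Stirling's approximation applied to the central binomial coefficient.

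First I would record the parity observation that exhibits $\nu$ as a central binomial coefficient in disguise. Since the blocking factor $b$ is odd, $s=b^d$ is odd, so $s-1$ is even; writing $s-1=2m$ with $m=\frac{s-1}{2}$ turns the numerator of $\nu$ into $\tbinom{2m}{m}$. Hence
\begin{equation*}
\nu=\frac{1}{2^{2m}}\tbinom{2m}{m}.
\end{equation*}

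Next I would invoke the standard Stirling estimate $\tbinom{2m}{m}\sim 4^m/\sqrt{\pi m}$, which gives $\nu\sim 1/\sqrt{\pi m}=\sqrt{2/\bigl(\pi(s-1)\bigr)}$. Multiplying by $s$ and using $s/\sqrt{s-1}\sim\sqrt{s}$ as $s\to\infty$ yields
\begin{equation*}
s\nu \sim s\sqrt{\frac{2}{\pi(s-1)}} \sim \sqrt{\frac{2s}{\pi}},
\end{equation*}
which is the assertion. There is no genuine obstacle here: the argument is essentially a one-line application of Stirling's formula, and the only point requiring care is the bookkeeping of the asymptotic equivalences—in particular that replacing $s-1$ by $s$ under the square root is harmless in the limit because $\sqrt{(s-1)/s}\to 1$.
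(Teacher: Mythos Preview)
Your proof is correct and matches the paper's approach: both reduce the corollary to the asymptotic $s\nu\sim\sqrt{2s/\pi}$ and obtain it via Stirling's formula applied to the central binomial coefficient $\tbinom{s-1}{(s-1)/2}$. The only cosmetic difference is that you quote the packaged estimate $\tbinom{2m}{m}\sim 4^m/\sqrt{\pi m}$, whereas the paper writes out Stirling for the factorials directly; the two are equivalent.
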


\begin{proof}
This follows from an easy application of Stirling's formula:
\begin{eqnarray}
s\nu\sim\frac{s \sqrt{2\pi\cdot
(s-1)}(s-1)^{s-1}e^{-(s-1)}}{2\pi\frac{s-1}{2}\left(\frac{s-1}{2}\right)^{s-1}e^{-(s-1)}2^{s-1}}\sim\sqrt{\frac{2s}{\pi}}.
\end{eqnarray}
\end{proof}

\begin{theorem}
\label{admaj} Suppose the original interaction $J$ is at infinite
temperature. Then in the Banach Space $\mathcal{B}_r^*$, the point
spectrum of the adjoint of the linearization of the majority rule
transformation $\mathrm{L}^*(J)$ is empty. Moreover, every
$|\lambda|\leq \nu$ is in the residual spectrum of
$\mathrm{L}^*(J)$.
\end{theorem}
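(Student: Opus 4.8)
The plan is to prove the two assertions separately: emptiness of the point spectrum, and membership of the disc $|\lambda|\le\nu$ in the residual spectrum. Throughout I write $\pi$ for the block coarse-graining $\pi(X)=\{\mathbf{n}(\x):\x\in X\}$, where $\mathbf{n}(\x)$ is the image site whose block contains $\x$, so that by (\ref{admajJ}) one has $\mathrm{L}^*(J)K(Z)=c(Z)K(\pi(Z))$ with $c(Z)=\prod_{\mathbf{n}}\chi(Z\cap\mathbf{n}^o)$; in particular, on a singleton $\mathrm{L}^*(J)K(\{\x\})=\nu K(\{\mathbf{n}(\x)\})$, since $\chi$ of a one-point set equals $\nu$ by (\ref{polygon}).

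For the point spectrum, suppose $\mathrm{L}^*(J)K=\lambda K$ with $K\ne0$. If $\lambda\ne0$, I would first reduce to singletons: iterating gives $\lambda^mK(X)=\big(\prod_{j<m}c(\pi^jX)\big)K(\pi^mX)$, and since each application of $\pi$ contracts diameters by the factor $b$, $\pi^mX$ is a single point once $m$ is large; hence if $K$ vanishes on all singletons it vanishes everywhere. On singletons the eigenequation reads $K(\{\x\})=(\nu/\lambda)K(\{\mathbf{n}(\x)\})$, so following the ancestor chain to the origin gives $K(\{\x\})=(\nu/\lambda)^{k}K(\{\mathbf{0}\})$, where $k$ is the number of coarse-grainings carrying $\x$ to $\mathbf{0}$. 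As there are $s^{k}$ sites at depth $k$, the singleton part of $\|K\|_r^*$ equals $|K(\{\mathbf{0}\})|\sum_k (s\nu/|\lambda|)^k$, which diverges whenever $|\lambda|\le s\nu$; since every eigenvalue obeys $|\lambda|\le\|\mathrm{L}^*(J)\|=s\nu$, finiteness of the norm forces $K(\{\mathbf{0}\})=0$, hence $K\equiv0$. For $\lambda=0$, if $K(X_0)\ne0$ I would lift $X_0$ to the set $Y$ having exactly the block-center $b\mathbf{n}$ in each block $\mathbf{n}^o$, $\mathbf{n}\in X_0$; then $\pi(Y)=X_0$ and $c(Y)=\nu^{|X_0|}\ne0$, so $0=\lambda K(Y)=\mathrm{L}^*(J)K(Y)=\nu^{|X_0|}K(X_0)\ne0$, a contradiction.

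For the residual statement, since the point spectrum is empty it suffices to show $\overline{\mathrm{Range}(\lambda I-\mathrm{L}^*(J))}\ne\mathcal{B}_r^*$. Mirroring the proof of Theorem \ref{addec}, I would take the witness $K_0=\delta_{\{(1,0,\dots,0)\}}$ and argue that no $K'=(\lambda I-\mathrm{L}^*(J))S$ comes within $1/2$ of it. Along the ray $\mathbf{p}_n=(b^n,0,\dots,0)$ one has $\mathbf{n}(\mathbf{p}_{n+1})=\mathbf{p}_n$ and $\mathbf{n}(\mathbf{p}_0)=\mathbf{0}$, so the ray-restricted relations are exactly those of the decimation computation but with the factor $1$ replaced by $\nu$ and with one extra term $\nu S(\{\mathbf{0}\})$ at $n=0$. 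Telescoping with weights $(\lambda/\nu)^n$—legitimate precisely because $|\lambda|\le\nu$ makes $|\lambda/\nu|\le1$—then forces $|S(\{\mathbf{p}_n\})|\ge c>0$ for all $n$, and hence $\|S\|_r^*=\infty$, unless the extra value $S(\{\mathbf{0}\})$ is large; but if $S(\{\mathbf{0}\})$ is large, then $\mathrm{L}^*(J)$ copies $\nu S(\{\mathbf{0}\})$ onto every one of the remaining sites of the block $\mathbf{0}^o$, where $K_0$ vanishes, and these sibling contributions overspend the budget $1/2$. Either branch contradicts $\|K_0-K'\|_r^*\le1/2$.

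The main obstacle is exactly this last bookkeeping. In the decimation case the adjoint annihilates every set not of the form $bY$, so a single one-dimensional chain is self-contained and the origin never interferes; for majority rule the adjoint instead smears each preimage value across all $s$ sites of a block, so the origin term $\nu S(\{\mathbf{0}\})$ cannot simply be discarded, and one must track the ray together with its sibling sites simultaneously while allowing the competitor $S$ to redistribute mass among those siblings. Pinning the large/small $S(\{\mathbf{0}\})$ dichotomy down quantitatively—and checking that redistribution cannot cheapen both the ray telescoping and the sibling penalty at once—is where the real work lies; it is also the mechanism that makes the single-site constant $\nu$, rather than the block constant $s\nu$, the natural threshold here. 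If this combinatorics proves stubborn, I would instead invoke the pairing directly: by Theorem \ref{maj} each $\lambda$ with $|\lambda|\le s\nu$ is an eigenvalue of $\mathrm{L}(J)$ with eigenvector $u\in\mathcal{B}_r$, and then (\ref{corresp}) gives $\sum_Y u(Y)\,(\lambda I-\mathrm{L}^*(J))S(Y)=0$ for every $S$, so $\mathrm{Range}(\lambda I-\mathrm{L}^*(J))$ lies in the kernel of the nonzero continuous functional $K\mapsto\sum_Y u(Y)K(Y)$ on $\mathcal{B}_r^*$ and cannot be dense—yielding the residual spectrum at one stroke, indeed for the whole disc $|\lambda|\le s\nu$.
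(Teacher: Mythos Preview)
Your point-spectrum argument is essentially the paper's, with one small gap: you invoke $\|\mathrm{L}^*(J)\|=s\nu$, which the paper never proves for majority rule (it only proves $\|\mathrm{L}(J)\|=s\nu$, and the spaces are not true duals). The paper avoids this by observing directly that $\mathbf{0}$ is a fixed point of the coarse-graining, so the eigenequation at the origin reads $\lambda K(\{\mathbf{0}\})=\nu K(\{\mathbf{0}\})$; this kills $K(\{\mathbf{0}\})$ for every $\lambda\neq\nu$ with no norm bound needed, and then your ancestor-chain and iteration-to-singletons steps finish the job. The case $\lambda=\nu$ is then handled exactly as you do, by norm divergence. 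So your argument is easily repaired by inserting this one observation.

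For the residual part, your first (direct) approach stalls precisely because of your choice of witness. The paper places the witness at the \emph{origin}, $K=\delta_{\{\mathbf{0}\}}$, not at $(1,0,\dots,0)$. Because $\mathbf{0}$ is fixed by $\pi$, one gets the closed relation $K'(\{\mathbf{0}\})=(\lambda-\nu)S(\{\mathbf{0}\})$, which plugs the ``loose end'' you identified: the telescoping along $\mathbf{0},(1,0,\dots,0),(b,0,\dots,0),\dots$ now closes exactly as in the decimation proof (with weights $(\lambda/\nu)^n$, distance threshold $1/4$), and no sibling bookkeeping or large/small dichotomy for $S(\{\mathbf{0}\})$ is needed. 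That single change of basepoint is the whole trick.

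Your fallback duality argument is a genuinely different route and is correct: for any $u\in\mathcal{B}_r$ with $\mathrm{L}(J)u=\lambda u$, the pairing inequality from the introduction makes $K\mapsto\sum_X u(X)K(X)$ a nonzero continuous functional on $\mathcal{B}_r^*$, and (\ref{corresp}) shows the range of $\lambda I-\mathrm{L}^*(J)$ lies in its kernel. This is cleaner than the paper's explicit witness-and-telescope computation and, as you note, actually yields the stronger conclusion that the entire disc $|\lambda|\le s\nu$ is residual, not just $|\lambda|\le\nu$. The paper's approach has the merit of being self-contained and constructive (it exhibits an explicit element at positive distance from the range), while yours trades that for brevity and a sharper result.
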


\begin{proof}
The proof of this theorem follows from several propositions.
\end{proof}

\begin{proposition}
For every $\lambda \neq 0$ and $\lambda \neq \nu$, there is no
nontrivial eigenvector.
\end{proposition}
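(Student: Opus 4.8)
The plan is to exploit the fact that, by (\ref{admajJ}), the value $\mathrm{L}^*(J)K(Z)$ is a scalar multiple of $K(N(Z))$, where $N(Z)=\cup\{\n\}$ is the block image of $Z$, and then to run the resulting recursion until it collapses onto the singleton $\{\mathbf{0}\}$. First I would record the base case. For $Z=\{\mathbf{0}\}$ the only block meeting $Z$ is $\mathbf{0}^o$, with $W_{\mathbf{0}}=\{\mathbf{0}\}$ and $N(\{\mathbf{0}\})=\{\mathbf{0}\}$, so (\ref{admajJ}) reads $\mathrm{L}^*(J)K(\{\mathbf{0}\})=\chi(\{\mathbf{0}\})K(\{\mathbf{0}\})$. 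Since $\{\mathbf{0}\}$ is a single site, the polygon estimate (\ref{polygon}) evaluates $\chi(\{\mathbf{0}\})=\nu$ exactly, so the eigenvalue equation becomes $(\nu-\lambda)K(\{\mathbf{0}\})=0$; as $\lambda\neq\nu$ this forces $K(\{\mathbf{0}\})=0$.

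Next I would set up the collapsing recursion for a general finite set $Z$. Writing $\pi(\x)$ for the image site whose block contains $\x$, the set $N(Z)$ is exactly $\{\pi(\x):\x\in Z\}$. Because $b$ is odd, $\mathbf{0}$ is the unique fixed point of $\pi$ and $|\pi(x_i)|<|x_i|$ for every nonzero coordinate, so each coordinate is driven to $0$ under iteration. Hence for every fixed $\x$ there is an $n$ with $\pi^n(\x)=\mathbf{0}$, and taking the maximum over the finitely many points of $Z$ yields an $n$ with $N^n(Z)=\{\mathbf{0}\}$. I would then combine this with repeated use of the recursion $\lambda K(N^k(Z))=\big(\prod_{\n}\chi(W_{\n})\big)K(N^{k+1}(Z))$ read off from (\ref{admajJ}). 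Starting from $K(N^n(Z))=K(\{\mathbf{0}\})=0$ and using $\lambda\neq 0$, each step gives $K(N^{k}(Z))=0$, and descending from $k=n-1$ down to $k=0$ produces $K(Z)=0$, so $K$ vanishes identically.

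The role of the two exclusions is worth emphasizing, as each enters exactly once: $\lambda\neq\nu$ is what kills the base value $K(\{\mathbf{0}\})$ (otherwise a pure magnetic field survives), while $\lambda\neq 0$ is what makes the backward substitution invertible, turning $K(N^{k+1}(Z))=0$ into $K(N^{k}(Z))=0$ irrespective of the coefficient $\prod_{\n}\chi(W_{\n})$. I expect the main obstacle to be the second step: one must be certain that the block-image iteration genuinely terminates at the singleton $\{\mathbf{0}\}$ for an arbitrary finite $Z$, i.e.\ that collisions and block-boundary effects cannot leave a nontrivial $\pi$-periodic set, which is precisely where the strict coordinatewise contraction $|\pi(x_i)|<|x_i|$ for odd $b$ does the work. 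Once termination is in hand, the remainder is the routine recursion above.
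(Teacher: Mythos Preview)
Your proposal is correct and follows essentially the same route as the paper's proof: both use the eigenvalue equation at $\{\mathbf{0}\}$ to force $K(\{\mathbf{0}\})=0$ from $\lambda\neq\nu$, and both use iterated application of (\ref{admajJ}) together with $\lambda\neq 0$ to reduce $K(X)$ for any finite $X$ to a multiple of $K(\{\mathbf{0}\})$. The paper compresses the second step into one sentence (``$K(X)$ is either zero or a nonzero constant multiple of $K(\{\mathbf{0}\})$''), whereas you spell out explicitly the coordinatewise contraction $|\pi(x_i)|<|x_i|$ guaranteeing that the block-image iteration terminates at $\{\mathbf{0}\}$; this is a welcome clarification rather than a different argument.
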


\begin{proof}
Fix an arbitrary finite subset $X$. For $\lambda \neq 0$, $K(X)$
is either zero or a nonzero constant multiple of
$K(\{\mathbf{0}\})$ as a result of the action of
$\mathrm{L}^*(J)$. In particular, $\lambda
K(\{\mathbf{0}\})=\mathrm{L}^*(J)K(\{\mathbf{0}\})=\nu
K(\{\mathbf{0}\})$, which implies that $K(\{\mathbf{0}\})=0$.
\end{proof}

\begin{proposition}
For $\lambda=0$, there is no nontrivial eigenvector.
\end{proposition}

\begin{proof}
Suppose the nontrivial eigenvector $K(X)=m \neq 0$ for some finite
subset $X$, then the crucial fact that we can always find $Y$,
with $\mathrm{L}^*(J)K(Y)$ a nonzero constant multiple of $K(X)$
will do the job. As $\mathrm{L}^*(J)K(Y)=\lambda K(Y)=0$, we reach
a contradiction.
\end{proof}

\begin{proposition}
For $\lambda=\nu$, every nontrivial eigenvector has norm infinity.
\end{proposition}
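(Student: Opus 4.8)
The plan is to exploit the recursive structure of (\ref{admajJ}). Writing $B(Z)=\cup\{\n:\,Z\cap\n^o\neq\emptyset\}$ for the coarsened set of occupied blocks and $C(Z)=\prod_{\n:\,Z\cap\n^o\neq\emptyset}\chi(Z\cap\n^o)$ for the accompanying product of weights, the eigenvalue equation $\mathrm{L}^*(J)K=\nu K$ becomes $\nu K(Z)=C(Z)K(B(Z))$ for every finite subset $Z$. First I would specialize this to one-point sets. For a single site $\x$ lying in block $\n$, the configuration space on $\n^o$ is an odd polygon with $s$ vertices, so the equality case of (\ref{polygon}) applied with $V=\x$ gives $\chi(\{\x\})=\nu$ exactly. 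Hence $\nu K(\{\x\})=\nu K(\{\n\})$, and cancelling the $\nu$'s (this is precisely where $\lambda=\nu$ is special) yields $K(\{\x\})=K(\{\n\})$.

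Next I would iterate this one-point relation. The block map $\x\mapsto\n$ acts coordinatewise as rounding division by $b$, so the orbit of any site contracts strictly to the origin in finitely many steps; consequently $K(\{\x\})=K(\{\mathbf{0}\})$ for every site $\x$. This is the crux of the argument: the value of $K$ on all one-point sets is forced to equal the single constant $K(\{\mathbf{0}\})$.

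Now I would split into two cases. If $K(\{\mathbf{0}\})\neq 0$, then $K$ takes the same nonzero value on the infinitely many one-point sets, and since each such set contributes $|K(\{\x\})|$ (with $|X|=1$ and $l(\x,X)=0$) to the $\mathcal{B}_r^*$ norm, one gets $\|K\|_r^*\geq\sum_{\x\in\L}|K(\{\mathbf{0}\})|=\infty$. If instead $K(\{\mathbf{0}\})=0$, I claim $K\equiv 0$: for an arbitrary finite $Z$ the coarsening map drives $B^N(Z)=\{\mathbf{0}\}$ once $N$ is at least the maximal collapse time over the finitely many sites of $Z$, and iterating the eigenvalue relation gives $\nu^N K(Z)=\left(\prod_{k=0}^{N-1}C(B^k(Z))\right)K(\{\mathbf{0}\})=0$, so $K(Z)=0$. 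Thus a nontrivial eigenvector must have $K(\{\mathbf{0}\})\neq 0$, and therefore infinite norm, as claimed.

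The main obstacle is the geometric bookkeeping behind the coarsening dynamics: one must check that iterating $\x\mapsto\n$ (rounding division by $b$) sends every integer coordinate monotonically to $0$, so that $B^N(Z)=\{\mathbf{0}\}$ after finitely many steps, and confirm that the weights $C(B^k(Z))$ stay finite, which they do, being finite products of factors bounded by $\nu$ via (\ref{polygon}). The only other delicate point is the exact evaluation $\chi(\{\x\})=\nu$ for every single site, which is the equality case of the polygon proposition; once these two facts are in place the remainder of the argument is purely algebraic.
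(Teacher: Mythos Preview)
Your argument is correct and follows essentially the same approach as the paper: use the exact evaluation $\chi(\{\x\})=\nu$ to propagate the eigenvalue relation on singletons and conclude that $K(\{\x\})=K(\{\mathbf{0}\})$ for every site, forcing infinite $\mathcal{B}_r^*$ norm. You supply more detail than the paper does---in particular, your Case~2 (showing $K(\{\mathbf{0}\})=0\Rightarrow K\equiv 0$ via iterated coarsening to the origin) makes explicit what the paper asserts in one line by tacitly invoking the reasoning of the preceding proposition.
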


\begin{proof}
We must have $K(\{\mathbf{0}\})=m\neq 0$ in order for $K$ to be
nontrivial. As
\begin{equation}
\nu K(\{\x\})=\mathrm{L}^*(J)K(\{\x\})=\nu K(\{\mathbf{0}\})
\end{equation}
for $\x\in \mathbf{0}^o$, we see that $K(\{\x\})=m$ also.
Following similar fashion, $K(\{\n\})=m$ for arbitrary $\n$. But
then, $||K||_r^*=\infty$.
\end{proof}

\noindent \textit{Proof of Theorem \ref{admaj} continued.} The
only thing left to show now is that
$\overline{\mathrm{Range}(\lambda I-\mathrm{L}^*(J))}\neq
\mathcal{B}_r^*$ for $|\lambda|\leq \nu$. Define
$K(\{(0,0,...,0)\})=1$, and $K(X)=0$ for all other subsets $X$. We
will show that $K$ can not be approximated by any $K'$ in
$\mathrm{Range}(\lambda I-\mathrm{L}^*(J))$ within distance $1/4$.
To see this, note that for $n\geq 0$,
\begin{eqnarray}
K'(\{(b^{n+1},0,...,0)\})=\lambda S(\{(b^{n+1},0,...,0)\})-\nu
S(\{(b^n,0,...,0)\})
\end{eqnarray}
for some $S$ that lies in $\mathcal{B}_r^*$. And in particular,
\begin{equation}
K'(\{0,...,0\})=(\lambda-\nu)S(\{0,...,0\}).
\end{equation}
Suppose
\begin{equation*}
\frac{1}{4}\geq||K-K'||_r^*=\sum_{\x \in \L}\sup_{\x \in
X}\frac{1}{|X|}|K(X)-K'(X)|e^{-rl(\x, X)}
\end{equation*}
\begin{equation*}
\geq |K(\{(0,0,...,0)\})-K'(\{(0,0,...,0)\})|+
\sum_{n=0}^{\infty}|K(\{(b^n,0,...,0)\})-K'(\{(b^n,0,...,0)\})|
\end{equation*}
\begin{equation}
=|(\lambda-\nu) S(\{(0,0,...,0)\})-1|+|\lambda
S(\{(1,0,...,0)\})-\nu S(\{(0,0,...,0)\})|+\cdots
\end{equation}
Then, as $|\lambda|\leq \nu\leq \frac{1}{2}$, for any $n\geq 0$,
\begin{equation*}
\frac{1}{2}\geq|(\frac{\lambda}{\nu})^{n+1}(\lambda-\nu)
S(\{(b^n,0,...,0)\})-(\frac{\lambda}{\nu})^{n}(\lambda-\nu)
S(\{(b^{n-1},0,...,0)\})|+\cdots
\end{equation*}
\begin{equation}
+|\frac{\lambda}{\nu}(\lambda-\nu)
S(\{(1,0,...,0)\})-(\lambda-\nu)
S(\{(0,0,...,0)\})|+|(\lambda-\nu) S(\{(0,0,...,0)\})-1|.
\end{equation}
By the triangle inequality, this implies
\begin{equation}
|(\frac{\lambda}{\nu})^{n+1}(\lambda-\nu)
S(\{(b^n,0,...,0)\})-1|\leq \frac{1}{2},
\end{equation}
which further implies
\begin{equation}
|(\frac{\lambda}{\nu})^{n+1}(\lambda-\nu)S(\{(b^n,0,...,0)\})|\geq
\frac{1}{2}.
\end{equation}
Using $|\lambda|\leq \nu\leq \frac{1}{2}$ again, we have
\begin{equation}
|S(\{(b^n,0,...,0)\})|\geq \frac{1}{2}.
\end{equation}
But then,
\begin{equation}
||S||_r^*=\sum_{\x \in \L}\sup_{\x \in
X}\frac{1}{|X|}|S(X)|e^{-rl(\x, X)}\geq
\sum_{n=0}^{\infty}|S(\{(b^n,0,...,0)\})|=\infty.
\end{equation}

\begin{acknowledgement}
This work began at the Isaac Newton Institute in Cambridge during
the $2008$ program in Combinatorics and Statistical Mechanics,
organized by Alan Sokal. The author owes deep gratitude to her
advisor Bill Faris for his continued help and support. She also
thanks Tom Kennedy, Doug Pickrell, and Bob Sims for their kind and
helpful suggestions and comments. The author appreciated the
opportunity to talk about this work in the $2009$ workshop in
Renormalization Group and Statistical Mechanics in Vancouver,
organized by David Brydges, Joel Feldman, and A.C.D. van Enter,
and is grateful for the feedback from many of these people.
\end{acknowledgement}

\end{document}